\begin{document}
\tikzstyle{bnode}=[circle,draw,fill=black!50,minimum size=1mm]
\tikzstyle{cnode}=[circle,draw]
\tikzstyle{cgnode}=[circle,draw]
\tikzstyle{cnode}=[circle,draw]
\tikzstyle{crnode}=[circle,draw]
\tikzstyle{conode}=[rectangle,draw,minimum size=1cm]
\tikzstyle{cenode}=[rectangle,draw,minimum size=0.67cm]
\tikzstyle{ccnode}=[rectangle,draw,minimum size=1.7cm]
\tikzstyle{cpnode}=[circle,draw]
\tikzstyle{cmnode}=[circle,draw,minimum size=2cm]
\tikzstyle{rnode}=[rectangle,draw,outer sep=0pt]
\tikzstyle{prnode}=[rectangle,rounded corners,fill=blue!50,text width=4.5em,text centered,outer sep=0pt]
\tikzstyle{prnodebig}=[rectangle,rounded corners,fill=blue!50,text width=7em,text centered,outer sep=0pt]
\tikzstyle{prnodesimple}=[rectangle,draw,text width=4.5em,text centered,outer sep=0pt]
\tikzstyle{bigsnake}=[fill=green!50,snake=snake,segment amplitude=4mm, segment length=4mm, line after snake=1mm]
\tikzstyle{smallsnake}=[snake=snake,segment amplitude=0.7mm, segment length=4mm, line after snake=1mm]
\tikzstyle{anode}=[arc,draw]
\tikzset{middlearrow/.style={
        decoration={markings,
            mark= at position 0.5 with {\arrow{#1}} ,
        },
        postaction={decorate}
    }
}
\newtheorem{theorem}{Theorem}
\newtheorem{lemma}{Lemma}
\newtheorem{example}{Example}
\newtheorem{remark}{Remark}
\newcommand{\C}{\mathcal{C}}
\newcommand{\ch}{\text{ch}}
\newcommand\Tstrut{\rule{0pt}{2.6ex}}       % "top" strut
\newcommand\Bstrut{\rule[-0.9ex]{0pt}{0pt}}
\newcommand{\TBstrut}{\Tstrut\Bstrut}
\newcommand{\specialcell}[2][c]{%
  \begin{tabular}[#1]{@{}c@{}}#2\end{tabular}}
\def\e{\epsilon}
\def\th{\text{th}}

\title{Protograph LDPC Codes with Block Thresholds: Extension to  Degree-1  and Generalized Nodes}

\author{\IEEEauthorblockN{Asit Kumar Pradhan and Andrew Thangaraj}\\
\IEEEauthorblockA{Department of Electrical  Engineering\\
Indian Institute of Technology Madras,
Chennai 600036, India\\
Email: asit.pradhan,andrew@ee.iitm.ac.in}}

\maketitle
\begin{abstract}
\textbf{
Protograph low-density-parity-check (LDPC) codes  are considered to design near-capacity low-rate codes over the binary erasure channel (BEC) and binary additive white Gaussian noise (BIAWGN) channel. For protographs with degree-one variable nodes and doubly-generalized LDPC (DGLDPC) codes,  conditions are derived to ensure equality of bit-error threshold and block-error threshold. Using this condition, low-rate codes with block-error threshold close to capacity are designed and shown to have better error rate performance than other existing codes.} 
\end{abstract}

\section{Introduction}

Low-density-parity-check (LDPC) codes\cite{gallagerthesis}, introduced by Gallager in 1960s, became popular in 1990s, because of their excellent performance under iterative message-passing decoding\cite{Mackay}. 
Several applications including security applications like wiretap coding \cite{Arun} need code sequences with provable block-error threshold, i.e for any channel parameter below bit-error threshold, block-error rate should provably approach zero as blocklength goes to infinity. Two early efforts in block-error threshold for LDPC codes  include consideration of ML decoding \cite{gallagerthesis}  and a stopping set distribution based analysis \cite{orlitsky}.  In \cite{Lentermaier}, authors have shown that block-error and bit-error thresholds are the same under message-passing decoder for codes having minimum variable node degree greater than two.  In our earlier work\cite{Pradhan}, we have extended the block-error condition in \cite{Lentermaier} allowing degree-two nodes in protograph LDPC ensembles under the condition that the degree-two subgraph of the protograph is a tree. For BEC, the best reported rate-$1/2$ degree distribution with minimum variable node degree three has threshold $0.4619$\cite{Arun}, which is away from capacity. In\cite{Pradhan}, we have also designed high rate codes (rate $ \geq 1/2$) with block-error threshold close to capacity using differential evolution. However, computer search shows that low-rate  codes (rate $\leq 1/3$) satisfying block-error threshold condition derived in \cite{Pradhan} have bit-error threshold away from capacity. For example, a $7\times 8$ optimized protograph defining a rate-$1/8$ code has a gap of $0.3$ between block-error threshold and capacity over the Binary Erasure Channel (BEC). Low-rate codes with bit-error threshold close to capacity have a large fraction of degree-one bit nodes\cite{1523619}\cite{Richardson2004}, which are not allowed by the block-error threshold condition in \cite{Pradhan}. In this work,  we extend the block-error threshold condition in \cite{Pradhan} to allow degree-one bit nodes, which  play an important role in designing low-rate codes with block-error threshold close to capacity. Protographs in 5G standard\cite{5G} and protograph-based raptor like codes (PBRL)\cite{pbrl} satisfy the block-error threshold condition derived in this paper, while AR4A protographs\cite{1523619}  do not satisfy the block-error threshold condition, which has been validated by simulation results in Section \ref{sec:optim-dgldpc-prot}. Using the new block-error threshold condition, we have designed low-rate codes with block-error threshold close to capacity. For example, we have designed a rate-$1/8$ protograph LDPC code for BEC  with a block-error threshold of $0.866$ (gap of $0.009$ from capacity). For the binary additive white Gaussian noise (BIAWGN) channel, we have designed a code of rate-$1/3$ and  blocklength-$64800$ which has better BER/FER performance than the rate-$1/3$ protograph based raptor like (PBRL) code in\cite{pbrl}.
  
  We also extend the block-error threshold condition to protograph Doubly Generalized LDPC (DGLDPC) codes, introduced in \cite{wangDGLDPC} and studied in \cite{4787610,4939225,5437431,5695100}.  Block-error condition for protograph DGLDPC ensembles allows cycles even if degree of all the variable nodes in the cycle is two and enables better optimization of codes.
% Using protograph DGLDPC code, we also design rate-compatible codes, needed in applications like wireless communication. Traditionally, puncturing or extension is used for designing  rate-compatible LDPC codes \cite{Klinc,Anastasopoulos,Nguyen}. In standards, multiple protographs are used for different rates.  In a GLDPC protograph, we introduce rate-compatibility  by changing component codes at check nodes without altering any connections of the protograph. Using differential evolution, we simultaneously optimize thresholds corresponding to different component codes  at check nodes and obtain rate-compatible codes with reasonable gap to capacity, over a wide range of rates.

Rest of the paper is organized as follows. Section \ref{sec:defin-prel} introduces definition and notation for protograph DGLDPC codes and describes density evolution over BEC and BIAWGN channel.  Section \ref{sec:prop-block-error} derives conditions on protograph and component codes under which block-error threshold equals bit-error threshold for large-girth ensembles. Optimization of protograph DGLDPC code is described in Section \ref{sec:optim-dgldpc-prot}.  %Design  of rate-compatible codes is described in Section \ref{sec:rate adaptation}. 

%Our contributions are the following: (1) exact density evolution over the erasure channel,
%(2) conditions on protograph and component codes under which
%block-error threshold equals density evolution threshold for
%large-girth ensembles, (3)
%stability of the density evolution recursion, (4) extension to
%binary-input symmetric channel using Bhattacharya parameter, and
%(5) protograph optimization for near-capacity performance. Many of the
%results have been derived for the first time for protograph DGLDPC
%codes, and are based on similar results for protograph LDPC codes in
%\cite{Pradhan}. One of the advantages of generalized and doubly
%generalized codes is the flexibility they offer in terms of range of possible
%rates, particularly low rates with performance close to capacity. We
%demonstrate this flexibility through some optimized examples.

% This article is organized as follows. In Section \ref{sec:defin-prel}, we introduce
% protograph DGLDPC codes and exact density evolution over the binary
% erasure channel. In Section \ref{sec:prop-block-error}, we discuss the
% double-exponential fall property of message-error probability in
% density evolution, which is crucial for the existence of block-error
% threshold in large-girth ensembles. Stability and extension to 
% binary-input symmetric channels are also discussed in Section
% \ref{sec:prop-block-error}. We present some optimized protograph
% GLDPC and DGLDPC codes in Section \ref{sec:optim-dgldpc-prot}.

\section{Definitions and Preliminaries}  
\label{sec:defin-prel}
%Low-density parity-check (LDPC)  codes and their Tanner graph
%representation are well-known. The left-side nodes of the bipartite
%Tanner graph represent bits of the codeword, and the right-side nodes
%represent parity checks. If a check node is connected to $d$ bit 
%nodes, then those $d$ bits belong to the $(d,d-1)$ single parity check
%(SPC) code. If a bit node is connected to $d$ check nodes, then that
%bit, repeated $d$ times, participates in those $d$ checks. Hence, a
%bit node enforces a repetition code, while a check node enforces the
%SPC code. The design rate of the LDPC code is given by $1-\frac{M}{L}$,
%where $M$ and $L$ are the number of check and bit nodes, respectively. 
A general block-error threshold condition will be derived for doubly-generalized low density parity check (DGLDPC) codes. We will first define these codes formally and introduce notation for protograph  DGLDPC codes.

\subsection{Protograph DGLDPC codes}
Protograph LDPC codes are defined by Tanner graphs that are created
from a small base graph, called protograph, by a copy-permute
operation. Protograph DGLDPC codes are defined in a similar way by
allowing the variable and check nodes of the protograph to enforce
arbitrary linear codes as component codes. 

Fig. \ref{fig:protex} is an example of a protograph that expands
to a DGLDPC code.
\begin{figure}[htb]
\centering
\begin{tikzpicture}[every node/.style={scale=0.8}]
\begin{scope}[node distance=2cm,>=angle 90,semithick]
\node[cnode] (v1) {$v_1$};
\node[crnode] (v2)[below of=v1] {$v_2$};
\node[cgnode] (v3)[below of=v2] {$v_3$};
\node[cpnode,double] (v4)[below of=v3] {$v_4$};
\node[conode,double] (c1)[right of=v2,xshift=2cm,yshift=1cm] {$c_1$}
  edge[black] node[very near end,above]{$e_{v_1,1}$} node[very near start,above]{$e_{c_1,1}$}(v1)
  edge[blue] node[ near start,above]{$e_{c_1,2}$} (v2);
\draw[gray] (v4.65) -- node[very near start,left]{$e_{v_4,1}$}(c1.220);
\draw[cyan] (v4.35) -- node[very near start,right]{$e_{v_4,2}$}(c1.245);
\node[conode] (c2)[right of=v3,xshift=2cm,yshift=-1cm] {$c_2$}
  edge[magenta] node[very near start,right]{$e_{c_2,1}$} (v1)
  edge[violet] node[near start,above]{}(v4);
\draw[brown] (v3) --node[very near start,above]{} (c1);
\draw[orange] (v3.335) --node[very near start,above]{} (c2.180);
\draw[red] (v2) --node[very near end,left]{$e_{c_2,2}$}(c2);
\draw[green] (v4.-10) --node[very near start,above]{}(c2.207);
\draw[purple] (v4.-30) --node[very near start,above]{}(c2.215);
\node (b1v4) [left of =v4,yshift=0.13cm,xshift=0.7cm]{};
\node(b2v4) [left of =v4,yshift=-0.13cm,xshift=0.7cm]{};
\draw[black](b1v4) --node[]{} (v4.162);
\draw[black](b2v4) --node[]{} (v4.1999);
\node (b1v3) [left of =v3,xshift=0.7cm]{};
\draw[black](b1v3) --node[]{} (v3);

\node (b1v2) [left of =v2,xshift=0.7cm]{};
\draw[black](b1v2) --node[]{} (v2);

\node (b1v1) [left of =v1,xshift=0.7cm]{};
\draw[black](b1v1) --node[]{} (v1);

\end{scope}    
\end{tikzpicture}
\caption{Example of a protograph for DGLDPC codes. Double line for a
  node indicates that its component code is generalized.}
\label{fig:protex}
\end{figure}
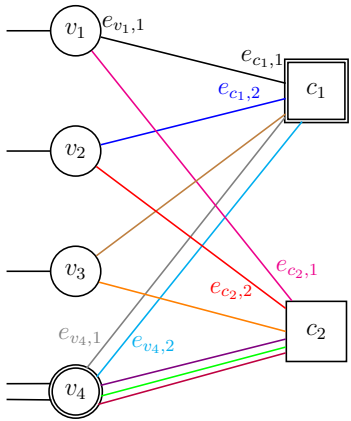
The variable node $v_4$ 
and the check node $c_1$ have a (5,2) linear code as component
code. All other check nodes and variable nodes enforce single parity check code and repetition code, respectively. 

A protograph is denoted as $G=(V\cup
C,E)$, where $V$ and $C$ are the set of variable and
check nodes, respectively, and $E$ is the set of undirected edges. Multiple
parallel edges are allowed between a variable node and a check
node in a protograph. 
% The nodes and edges in the protograph are
% ordered, and the $i$-th variable node and check node  in the
% protograph are denoted, respectively, $v_i$, $c_i$. The degree of
% $v_i$ and $c_i$ are denoted $d_{vi}$ and $d_{ci}$, respectively, and
% their component codes are $(d_{vi},k_{vi})$ and $(d_{ci},k_{ci})$
% linear codes.
Let $d_{v_i}$ and $d_{c_j}$ denote the degree of variable node $v_i$ and check node $c_j$, respectively. 
The edges connected to a variable node $v_i$ or a check node $c_j$ are denoted by $e_{v_i,m}$ and $e_{c_j,n}$, respectively, where $m\in\{1,2\ldots d_{v_i}\}$ and $n\in\{1,2\ldots d_{c_j}\}$.
If $v_i$ is connected to $c_j$, then $e_{v_i,m} = e_{c_j,n}$ for some
$m$, $n$. The variable and check nodes connected to edge $e$ are denoted by $v_e$ and $c_e$, respectively.

The lifted or expanded graph is obtained by the copy-permute
operation\cite{Thrope}, and is specified by the number of copies
and a permutation for each edge type. First, a given protograph is copied $T$ times. Variable node $v_i$, check node $c_j$, and edge $e_{v_i,m}$ of $t$-th copy of protograph are denoted by $(v_i,t), (c_j,t)$ and $(e_{v_i,m},t)$, respectively. For each edge $e_{v_i,m}$ of protograph, we assign a permutation $\pi_{i,m}$ of the set $\{1,2\ldots T\}$.
 If $v_i$ is connected to $c_j$ by $e_{v_{i,m}}$ in the protograph, then after permutation operation, 
the edge $(e_{v_i,m},t)$ connects the variable node $(v_i,t)$ to check node $(c_j,\pi_{i,m}(t))$. Copies of edge $e_{v_i,m}$, variable node $v_i$ and
check node $c_j$ in the lifted graph are said to be of type
$e^v_{i,m}$, $v_i$ and $c_j$, respectively. In the copy
operation,  variable node  $(v_i,t)$ and check node $(c_j,t)$ will,  respectively, have the same component code as variable node $v_i$ and check node $c_j$ of the protograph.  The design rate of the lifted graph is the same as that of
the protograph. 
For the lifted graph,  component code at each node and edge types in the computation graph is completely determined by the protograph $G$. Girth of a graph is defined as the least length of a cycle in the graph. On a girth-$g$ lifted graph,  protograph density evolution analysis is accurate up to iteration $\frac{g}{2}-1$.
%\subsection{Message passing decoder}
%\begin{figure}[htb]
%\centering
%\begin{tikzpicture}[every node/.style={scale=0.8}]
%\begin{scope}[node distance=2cm,>=angle 90,semithick]
%\node[cmnode,double] (v1) {$v_i$};
%\node[ccnode,double] (c1)[right of=v1,xshift=2cm,yshift=0cm] {$c_j$};
%\draw[black][->] (v1.330) --node[very near start,below]{$r^{(t)}_{v_i,d_{v_i}}$}+(330:1cm);
%\draw[black][dashed,<-] (v1.340) --node[very near end,above]{$s^{(t)}_{v_i,d_{v_i}}$} +(332:1cm);
%\node[right of=v1,xshift=-0.7cm,yshift=0.1cm]{$\vdots$};
%\draw[black][dashed,<-] (v1.20) --node[very near end,below]{$s^{(t)}_{v_i,1}$} +(20:1cm);
%\draw[black][->] (v1.30) --node[very near start,above]{$r^{(t)}_{v_i,1}$}+(22:1cm);
%
%\node (b1v1) [left of =v1,xshift=-0.5cm,yshift=0.3cm]{$y_{v_i,1}$};
%\draw[black](b1v1) --node[]{} (v1.163);
%\node[left of=v1,xshift=0.5cm,yshift=0.1cm]{$\vdots$};
%\node (b1v2) [left of =v1,yshift=-0.3cm,xshift=-0.5cm]{$y_{v_i,kv_i}$};
%\draw[black](b1v2) --node[]{} (v1.197);
%\draw[black][->] (c1.330) --node[very near end,below]{$s^{(t)}_{c_j,d_{c_j}}$}+(330:1cm);
%\draw[black][dashed,<-] (c1.340) --node[very near end,above]{$r^{(t)}_{c_j,d_{c_j}}$} +(332:1cm);
%\draw[black][dashed,<-] (c1.20) --node[very near end,below]{$r^{(t)}_{c_j,1}$} +(20:1cm);
%\draw[black][->] (c1.30) --node[very near end,above]{$s^{(t)}_{c_j,1}$}+(22:1cm);
%\node[right of=c1,xshift=-0.8cm,yshift=0.1cm]{$\vdots$};
%\end{scope}    
%\end{tikzpicture}
%\caption{Message passing decoder at genralized check and variable node}
%\label{fig:MAP decoding}
%\end{figure}

Iterative message-passing decoding of DGLDPC codes is a generalized version of iterative decoding used for standard LDPC codes. At check node $c_j$, extrinsic Maximum \emph{A Posteriori} (MAP) processing is performed using the enforced $(d_{c_j},k_{c_j})$  component code.  Given input log-likelihood ratio (LLR) at a check node $c_j$, computation of extrinsic LLR is described below. Codeword associated with component code at check node $c_j$ are denoted  by $\mathbf{z}_j=(z_{j,1},z_{j,2},\cdots z_{j,d_{c_j}}).$ Let $\mathbf{r}_j=(r_{j,1},r_{j,2}, \cdots ,r_{j,d_{c_j}})$ be the a input LLR at check node $c_j$. Then, extrinsic output LLR, denoted by $\mathbf{s}=(s_{j,1},s_{j,2}, \cdots s_{j,d_{c_j}})$, can be computed as follows.
\begin{align*}
s_{j,m}&=\log \frac{P(z_{j,m}=0|\mathbf{r}_{j\setminus m})}{P(z_{j,m}=1|(\mathbf{r}_{j\setminus m})},\\
&=\log \frac{\displaystyle\sum_{\mathbf{z_j}:z_{j,m}=0}P((\mathbf{r}_{j\setminus m}|\mathbf{z}_{j\setminus m})}{\displaystyle\sum_{\mathbf{z_j}:z_{j,m}=1}P(\mathbf{r}_{j\setminus m}|\mathbf{z}_{j\setminus m})},\\
\end{align*}
where $\mathbf{z_{j\setminus m}}=(z_{j,1}, \cdots z_{j,m-1},z_{j,m+1} \cdots z_{j,d_{c_j}})$ and $\mathbf{r_{j\setminus m}}=(r_{j,1}, \cdots r_{j,m-1},r_{j,m+1} \cdots r_{j,d_{c_j}}).$
At variable node $v_i$ enforcing a $(d_{v_i},k_{v_i})$
component code with generator matrix $\mathcal{G}$, channel information for $k_{v_i}$
bits is combined with incoming messages from check nodes in the previous iteration by extrinsic MAP processing on the extended component code with generator matrix $[\mathcal{G} |I_{k_{v_i}}]$, where $I_{k_{v_i}}$ is
the $k_{v_i}\times k_{v_i}$ identity matrix. Computation of extrinsic LLR at variable node is similar to check node.
\subsection{Density evolution over BEC}
\label{sec:density_evolution_BEC}
%\subsubsection{Binary Erasure channel}
Consider iterative message passing decoding on the lifted 
Tanner graph $G'$ derived from a DGLDPC protograph $G=(V\cup C,E)$
after transmission over a
BEC with erasure probability $\epsilon$. In iteration $t$, let
$x^t_{v_i,m}$ denote the probability that  an edge of type $e_{v_i,m}$ carries an
erasure from variable node to check node. Since the lifted graph has $|E|$ edge types, density evolution
is a vector recursion that proceeds by computing $x^{t+1}_{v_i,m}$ for, $1 \leq i \leq |V|, 1 \leq m \leq d_{v_i}$, from the vector $\{x^t_{v_i,m}\}$.  
Let $y^t_{c_j,n}$ denote the probability that an edge of type $e_{c_j,n}$ carries an
erasure from check node to variable node in the $t$-th iteration.
 If nodes connected to an edge $e$ are not relevant in some context, then erasure probability on $e$ from variable node to check node  and check node to variable node after $t$ iteration are denoted by $x^t_e$ and $y^t_e$, respectively.
Since MAP processing is done with the component code at check and
variable nodes, the evolution of $x^t$ and $y^t$ will depend on
the extrinsic messages generated by MAP decoders of the component
codes. 
% Let us denote the extrinsic message of the $m$-th bit generated by the MAP decoder of component code at $c_j$ as $f_m^{c_j}$ for $1\le m\le d_{cj}$. 
% As explained earlier, at a variable node, MAP decoding is done over
% the extended code to exploit channel information.
% Denote the extrinsic message of the $n$-th bit generated by the MAP
% decoder of the extended component code at $v_i$ as $f_n^{v_i}$ for
% $1\le n\le d_{vi}$. 
% It is important to observe that $f_m^{v_i}$ and $f_n^{c_j}$ are
% different, in general, for each $m$ and $n$, respectively. So, the order of edges connected 
% to a variable or check node may affect the performance of the
% ensemble. 
For obtaining an explicit expression for the probability of erasure of an
extrinsic message generated by the MAP decoder of a linear code, a
method based on the support weights and information functions of the
linear code is used as described and discussed in \cite{Sharon}. An alternative method based on multi-dimensional input/output transfer functions of the component decoders has been described in \cite{LentmaierBECProtoDE} to obtain expression for the probability of erasure of an extrinsic message. An example of a $(5,2)$ linear code, worked out in \cite{LentmaierBECProtoDE}, is reproduced here and used later in an illustrative example of density evolution.
\begin{example}
Consider the (5,2) linear code with codewords
$\{00000,01011,10101,11110\}$. Let $x_i$, $i=1,2,\ldots,5$, denote the independent
input erasure probabilities of the 5 bits. Let $h_i(x_{\sim i})$,
where $x_{\sim i}=\{x_1,\ldots,x_5\}\setminus \{x_i\}$ is the list of all variables except $x_i$, denote
the output erasure probability of bit $i$. From
\cite{LentmaierBECProtoDE}, $h_i$ can be explicitly written in terms of $x_i$. For example, $h_1$ and $h_3$ are as follows:
  \begin{align}
 %\label{eq:8}
  h_1(x_2,x_3,x_4,x_5)&=x_3x_5+x_2x_3x_4-x_3x_4x_5x_2,\nonumber\\
  h_3(x_1,x_2,x_4,x_5)&=x_1x_5+x_1x_2x_4-x_1x_4x_5x_2.
 \end{align}
\label{ex:EXIT}
\end{example}
To proceed further, we assume that the extrinsic message-erasure
probabilities from MAP processing at $m$-th edge of node $v_i$ and $n$-th edge of node $c_j$ have been
derived, and these are denoted as $h_{v_i,m}(\cdot)$ and
$h_{c_j,n}(\cdot)$, respectively. With this notation, the protograph density
evolution recursion is given by the following equations for $1\le i\le
|V|$, $1\le m\le d_{v_i}$, $1\le j\le |C|$, $1\le n\le d_{c_j}$:
\begin{align}
 \label{eq:5}
 x^0_{v_i,m}&=h_{v_i,m}(\mathbf{1}_{d_{v_i}-1},\epsilon\mathbf{1}_{k_i}),\\
 \label{eq:6}
 y^{t+1}_{c_j,n}&=h_{c_j,n}\left(\mathbf{x}^t_{c_j,\sim n}\right), \\
 \label{eq:7}
 x^{t+1}_{v_i,m}&=h_{v_i,m}\left(\mathbf{y}^t_{v_i,\sim m},\epsilon\mathbf{1}_{k_i} \right).
\end{align}
where $\mathbf{x}^t_{c_j, \sim n}=\{x^t_{c_j,1},\cdots,x^t_{c_j,n-1},x^t_{c_j,n+1},\cdots,x^t_{c_j,d_{c_j}}\},$ \\ $\mathbf{y}^t_{v_i, \sim m}=\{y^t_{v_i,1},\cdots,y^t_{v_i,m-1},y^t_{v_i,m+1},\cdots,y^t_{v_i,d_{v_i}}\}$ and $\mathbf{1}_k$
is the length-$k$ all-ones vector. In the first iteration, shown in \eqref{eq:5}, the probability that an incoming message from
a check node is an erasure is set as 1. Erasure probability from
the channel is set to be $\epsilon$. 
%To proceed further, the output erasure probability iscomputed for the MAP decoder of the extended version of the
%component code at $v_i$ in \eqref{eq:5}. Subsequent iterations use MAP processing
%functions at the variable and check nodes for evolution as shown in
%\eqref{eq:6} and \eqref{eq:7}.
%In the next example, density evolution recursion is illustrated for the protograph in Fig. \ref{fig:protex}.
Example 2 illustrates density evolution recursions for a variable node having a (5,2) linear code as component code.
\begin{example}
\label{ex:density_evolution}
 In Fig. \ref{fig:protex}, let the component codes at the variable node
 $v_4$ and check node $c_1$ be the $(5,2)$ code
 considered in Example \ref{ex:EXIT}. All other variable and check
 nodes enforce repetition codes and SPC codes, respectively. As mentioned
 earlier, at $v_4$ MAP decoding is done over the
 extended version of the $(5,2)$ code with codewords $\{0000000,0101101,1010110,1111011\}$.
% parity check matrix
%  \begin{equation}
%  \label{eq:extend_par_mat}
%   \begin{bmatrix}
%  1 & 0 & 1 & 0 & 0 & 0 & 0 \\
%  0 & 1 & 0 & 1 & 0 & 0 & 0\\
%  1 & 1 & 0 & 0 & 1 & 0 & 0\\
%  1 & 0 & 0 & 0 & 0 & 1 & 0\\
%  0 & 1 & 0 & 0 & 0 & 0 & 1
% \end{bmatrix}.
%  \end{equation}
Before starting recursion, assign $y^0_{v_i,m}=1$ for $1\leq i \leq |V|
$ and $1 \leq m \leq d_{v_i}$. The evolution for a few edges is shown below:
\begin{align*}
& x^{t+1}_{v_1,1}=\epsilon y^{t}_{v_1,2}, \quad x^{t+1}_{v_1,2}=\epsilon y^{t}_{v_1,1},\\
 &x^{t+1}_{v_4,1}=\epsilon y^{t}_{v_4,3}y^{t}_{v_4,5} +\epsilon^2 y^{t}_{v_4,2}y^{t}_{v_4,3}y^{t}_{v_4,4}
-\epsilon^2 y^{t}_{v_4,2}y^{t}_{v_4,3}y^{t}_{v_4,4}y^{t}_{v_4,5},\\ 
%&x_{t+1}(e^v_{4,2})=\epsilon y_{t}(e^v_{4,4})y_{t}(e^v_{4,5})+\epsilon^2 y_{t}(e^v_{4,1})y_{t}(e^v_{4,3})y_{t}(e^v_{4,4})\\
%&\quad \quad \quad \quad \quad-\epsilon^2y_{t}(e^v_{4,1})y_{t}(e^v_{4,3})y_{t+1}(e^v_{4,4})y_{t}(e^v_{4,5}),\\
&y^{t}_{c_1,1}=x^{t}_{c_1,3}x^{t}_{c_1,5} + x^{t}_{c_4,2}x^{t}_{c_1,3}x^{t}_{c_1,4}- x^{t}_{c_1,2}x^{t}_{c_1,3}x^{t}_{c_1,4}x^{t}_{c_1,5}.
% \end{align*}
% \begin{align*}
%&y_{t}(e_{2,1}^c)=1-\left[(1-x_{t}(e_{2,2}^c))(1-x_{t}(e_{2,3}^c))\right.\\
%&\quad \quad \quad \quad \quad\left. (1-x_{t}(e_{2,4}^c))(1-x_{t}(e_{2,5}^c))(1-x_{t}(e_{2,6}^c))\right].\\
\end{align*}
\end{example}

\section{Block-Error Threshold Extensions} 
\label{sec:prop-block-error}
In this section, we generalize block-error threshold conditions  for protograph LDPC codes to protograph with degree-1 variable nodes and to DGLDPC codes. 
The density evolution threshold or bit-error threshold, denoted as $\e_{\text{th}}$, for
the protograph ensemble is defined as the supremum of the set of $\epsilon$ for which erasure probability on each edge tends to zero as iterations tend to
infinity, i.e 
$$\e_{\th}=\sup\{\epsilon:\max\limits_{e\in E}x^t_e\rightarrow0\}.$$
Let us define $\overline{x}^t$ as follows:$$x^t=\sup_{e \in E}x^t(e).$$ Block-error threshold of protograph ensemble is defined as the supremum of the set of $\e$ for which probability of block error, denoted by $P_B$, tends to zero as the number of iterations tends to infinity.
In \cite{Pradhan}, sufficient conditions for block-error threshold being equal to bit-error threshold have been derived using the following two steps:
\begin{itemize}
\item[1.]In first step, it has been shown that $x^t$ falls double exponentially with $t$, i.e. $$\overline{x}^t=\mathcal{O}(\exp(-\beta 2^{\alpha t}))$$ with $\alpha > 0,\beta > 0$, when the degree-two subgraph of the protograph is a tree and $\e \leq \e_{\text{th}}$.
\item[2.]Under the assumption that the girth, denoted by $g$, of the lifted code of blocklength $n$ is $\mathcal{O}(\log n)$ and $t < g/2$, it has been shown that block-error threshold is same as bit-error threshold by upper bounding $P_B$ with $nx^t$ and using double-exponential fall property of $x^t$ as follows:
$$P_B \leq n\overline{x}^t = \mathcal{O}(n\exp(-\beta 2^{\alpha t}))= \mathcal{O}(n\exp(-\beta n^\alpha)).$$  
\end{itemize}
The basic idea in the proof of step one is the following:
when the degree-two subgraph of a protograph is a tree, variable node with degree greater than two is traversed in every $|V|$ (number of variable nodes) iterations of density evolution,  resulting in squaring of $x^t$, which is sufficient to show double exponential fall of $x^t$ as described in \cite{Pradhan}.
%\begin{figure}
%%\begin{subfigure}[a]{0.47\textwidth}
%%\subcaptionbox{Computation graph with degree one node \label{fig:computation_graph_deg1}[.4\linewidth][c]{
%\centering
%\begin{tikzpicture}[every node/.style={scale=0.8}]
%\begin{scope}[node distance=2cm,>=angle 90,semithick]
%\node[cenode] (c1) {};
%\node[cnode](v1)[above of=c1] {$v_1$};
%\node(c0)[above of=v1] {};
%\draw[black] (c0) --node[very near start,left]{$e_1$} (v1);
%\draw[black] (c1) --node[very near end,left]{$e_2$} (v1);
%\node[cnode](v3)[below of=c1] {$v_3$};
%\node[cnode](v2)[below of=c1,xshift=-1cm] {$v_2$};
%\node[cnode](v4)[below of=c1,xshift=1cm] {$v_4$};
%\draw[black] (c1) --node[very near end,left]{$e_5$} (v4);
%\draw[black] (c1) --node[very near end,left]{$e_3$} (v2);
%\draw[black] (c1) --node[very near end,left]{$e_4$} (v3);
%\node(v5)[below of=v3,xshift=-0.25cm] {};
%\node(v6)[below of=v3,xshift=0.25cm] {};
%\node(v7)[below of=v4,xshift=-0.25cm] {};
%\node(v8)[below of=v4,xshift=0.25cm] {};
%\draw[black] (v3) -- (v5);
%\draw[black] (v3) -- (v6);
%\draw[black] (v4) -- (v7);
%\draw[black] (v4) -- (v8);
%\end{scope}    
%\end{tikzpicture}
%%}
%\caption{Computation graph with degree one node}
%\label{fig:computation_graph_deg1}
%\end{figure}
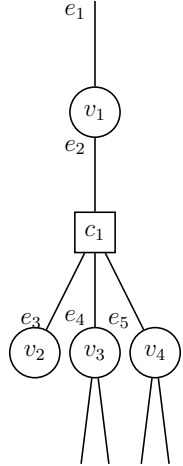
\begin{figure}
%\subcaptionbox{Protograph $G$\label{fig:Protograph_dex_illustration}}[.25\linewidth][c]{
%\centering
%\begin{tikzpicture}[every node/.style={scale=0.8}]
%\begin{scope}[node distance=2cm,>=angle 90,semithick]
%\node[cnode] (v2) {$v_1$};
%\node[cnode] (v2)[above of=v1] {$v_2$};
%\node[cnode] (v3)[above of=v2] {$v_3$};
%%\node[cnode] (v4)[above of=v3] {$v_4$};
%\node[cenode] (c1)[right of=v2] {$c_1$};
%\node[cenode] (c2)[right of=v3] {$c_2$};
%\draw[blue] (v1) --node[very near start,left]{$e_1$} (c2);
%\draw[black] (v1) --node[very near start,right]{$e_2$} (c1);
%\draw[orange] (v2) --node[very near start,below]{$e_3$} (c1);
%\draw[purple] (v3.320) --node[very near start,right]{$e_5$} (c1.120);
%\draw[red] (v3.300) --node[very near end,below]{$e_4$} (c1.140);
%\draw[green] (v3.0) --node[very near end,below]{$e_6$} (c2);
%\draw[brown] (v3.20) --node[very near start,above]{$e_7$} (c2.160);
%%\draw[black] (v4.310) --node[very near start,left]{$e_5$} (c1.140);
%%\draw[black] (v4.330) --node[near start,below]{$e_8$} (c2.140);
%%\draw[black] (v4.350) --node[near start,above]{$e_9$} (c2.120);
%\end{scope}    
%\end{tikzpicture}
%%\caption{Subgraph of protograph obtained from algorithm-\ref{deg2node_condition}.}
%}
%\subcaptionbox{Computation graph with degree one node \label{fig:computation_graph_deg1}}[.4\linewidth][c]{
\centering
\begin{tikzpicture}[every node/.style={scale=0.8}]
\begin{scope}[node distance=2cm,>=angle 90,semithick]
\node[cenode] (c1) {$c_1$};
\node[cnode](v1)[above of=c1] {$v_1$};
\node(c0)[above of=v1] {};
\draw[black] (c0) --node[very near start,left]{$e_1$} (v1);
\draw[black] (c1) --node[very near end,left]{$e_2$} (v1);
\node[cnode](v3)[below of=c1] {$v_3$};
\node[cnode](v2)[below of=c1,xshift=-1cm] {$v_2$};
\node[cnode](v4)[below of=c1,xshift=1cm] {$v_4$};
\draw[black] (c1) --node[very near end,left]{$e_5$} (v4);
\draw[black] (c1) --node[very near end,left]{$e_3$} (v2);
\draw[black] (c1) --node[very near end,left]{$e_4$} (v3);
\node(v5)[below of=v3,xshift=-0.25cm] {};
\node(v6)[below of=v3,xshift=0.25cm] {};
\node(v7)[below of=v4,xshift=-0.25cm] {};
\node(v8)[below of=v4,xshift=0.25cm] {};
\draw[black] (v3) -- (v5);
\draw[black] (v3) -- (v6);
\draw[black] (v4) -- (v7);
\draw[black] (v4) -- (v8);
\end{scope}    
\end{tikzpicture}
%}
%\subcaptionbox{Computation graph after removing degree-1 node \label{fig:introducing_deg1}}[.4\linewidth][c]{
%\centering
%\begin{tikzpicture}[every node/.style={scale=0.8}]
%\begin{scope}[node distance=2cm,>=angle 90,semithick]
%\node[cnode] (v1) {$v_1$};
%\node[cenode](c1)[below of=v1] {$c_1$};
%\node[cenode](c2)[below of=v1,xshift=2cm] {$c_2$};
%\node[cenode](c3)[below of=v1,xshift=-2cm] {$c_3$};
%\node[cnode](v3)[below of=c1,xshift=0cm] {$v_3$};
%\node[cnode](v2)[below of=c3,xshift=-0cm] {$v_2$};
%\node[cnode](v4)[below of=c2,xshift=1cm] {$v_4$};
%\node[cnode](v5)[below of=c2,xshift=-1cm] {$v_5$};
%\node (v6)[below of=v4,xshift=0.5cm] {};
%\node (v7)[below of=v5,xshift=0.5cm] {};
%\node (v8)[below of=v4,xshift=-0.5cm] {};
%\node (v9)[below of=v5,xshift=-0.5cm] {};
%\draw[black] (v1) --node[near end,left]{$e_1$} (c1);
%\draw[black] (v1) --node[near end,left]{$e_2$} (c2);
%\draw[black] (v1) --node[near end,left]{$e_3$} (c3);
%\draw[black] (c1) --node[near end,left]{$e_4$} (v3);
%\draw[black] (c3) --node[near end,left]{$e_5$} (v2);
%\draw[black] (c2) --node[near end,left]{$e_6$} (v5);
%\draw[black] (c2) --node[near end,left]{$e_7$} (v4);
%\draw[black] (v4) --node[near end,left]{$e_{11}$} (v6);
%\draw[black] (v5) --node[near end,left]{$e_{9}$} (v7);
%\draw[black] (v4) --node[near end,left]{$e_{10}$} (v8);
%\draw[black] (v5) --node[near end,left]{$e_{11}$} (v9);
%\end{scope}    
%\end{tikzpicture}
%}
\caption{Computation graph with degree one node} \label{fig:computation_graph_deg1}
%\caption{Computation Graphs}
%\label{example:double_exponential fall}
\end{figure}
Let us consider the computation graph in Fig.   \ref{fig:computation_graph_deg1} with a degree-one variable node $v_2$ and a degree-2 variable node $v_1$. In iteration $t$, we have
\begin{align}
x^t_{e_1}&=\e y^t_{e_2} \nonumber\\
\label{eq:deg1_illustration}
&=\e (1-(1-x^{t-1}_{e_3})(1-x^{t-1}_{e_4})(1-x^{t-1}_{e_5}))\nonumber\\
&\geq \e (1-(1-x^{t-1}_{e_3})) = \e^2.
\end{align} 
By using similar argument as above, it can also be shown that $y^t$ corresponding to $e_4$ and $e_5$ is less than $\e$.
This shows that the argument for double exponential fall as described in \cite{Pradhan} does not carry over directly when the protograph has degree-one variable nodes even for edges that are not directly connected to degree-$1$ variable nodes.
\subsection{Protograph LDPC code with degree-1 variable nodes}
\label{sec:double_exp_fall_standard}
We say a function $f(t)$ falls double exponentially with $t$ if $f(t)=O(\exp(-\beta 2^{\alpha t}))$ for sufficiently large $t$, with $\alpha$ and $\beta$ being positive constants. The property of block-error threshold being equal to bit-error threshold  does not require $x^t_e$ and $y^t_e$ for all $e \in E$ to fall double exponentially. It is enough to show that probability of bit-error $P_b$ corresponding to information bits falls double exponentially with iteration. Let $P_b(v)$ be the probability of bit error corresponding to a variable node $v$. We observe that $P_b(v)$ falls double exponentially if $y^t_e$ corresponding to at least one of incoming edges from a check node connected to $v$ falls double exponentially. If $x^t_e$ falls double exponentially with $t$, then the edge $e$ might help in the double exponential fall of $y^t$ corresponding to other edges of protograph. To find out the set of variable nodes for which $P_b(v)$ falls double exponentially with $t$, we need to find out set of edges for which $x_e^t$ and/or $y^t_e$ fall double exponentially with $t$.   
 
Consider a protograph $G(V \cup C ,E)$. Let $V_1 \subset V$ be the set of degree-one variable nodes and $E_1 \subset E$ be the set of edges incident on them.  Define $C_1=\{c: c$ is connected to $v\in V_1\}$. Let $G_2$ be the subgraph of $G$ induced by degree-two variable nodes. Let $E_2 \subset E$ be the set of edges of cycles in $G_2$. For a subgraph $\widehat{G}(\widehat{V} \cup \widehat{C}, \widehat{E})$ of the protograph $G$, similarly define $\widehat{V_1}, \widehat{C_1}, \widehat{G}_2$ and $\widehat{E}_2$. For $v \in  \widehat{V}$, let $E_v$ and $\widehat{E}_v$ denotes the set of edges connected to $v$ in protograph $G$ and its subgraph $\widehat{G}$, respectively. Similarly, define $E_c$ and $\widehat{E}_c$ for $c \in  \widehat{C}.$ Define $D_y =\{e: y_e^t$ falls double exponentially with $t\},$ $D_x=\{e: x_e^t$ falls double exponentially with $t\}$, $\overline{D}_x=E\setminus D_x, \overline{D}_y = E \setminus D_y, D_{xy}= D_x \cap D_y, \overline{D}_{xy}= E\setminus D_{xy}.$ 
\subsubsection{Subgraph $RED(G)$} 
 In Lemmas \ref{lemma:not_dex_fall} and \ref{lemma:not_dex_fall_hat(G)}, described in Section \ref{sec:lemma1 and 2}, it will be shown that for an edge $e \in (E_1 \cup E_2)$,  $x^t_e$ or/and $y^t_e$ does not fall double exponentially with $t$. By using the above fact, the following algorithm finds a subgraph,  denoted by $RED(G)$, such that  edges of  $RED(G)$ are not in $(\overline{D}_x \cup \overline{D}_y).$ This is done by recursively removing edges in $E_1$ and $E_2$. Observe that $E_{RED(G)}=E\setminus (\overline{D}_{x} \cup \overline D_{y})$. If $E= \overline{D}_{x} \cup \overline{D}_{y}$, then Algorithm \ref{alg:deg2node_condition} returns an empty $RED(G)$.
 \begin{algorithm}

\begin{algorithmic}
\label{alg:deg2node_condition}
\STATE
\STATE 1) $\widehat{G}=G(V\cup C,E).$
\WHILE{$\widehat{G} (\widehat{V}\cup \widehat{C}, \widehat{E})$ has variable nodes of degree-one, or the subgraph induced by degree-two variable nodes is not a tree   }
\STATE 2) $\widehat{G}_2 (\widehat{V}_2 \cup \widehat{C}_2, \widehat{E}_2)$: Subgraph induced by degree-two variable node in $\widehat{G}$. $V_2'=\{v\in \widehat{V}_2: v$  belongs to a cycle in  $\widehat{G}_2\},$ $C'_2=\{c\in \widehat{C}: c$ is connected to some $v \in V'_2 \}$
\STATE 3) $\widehat{G}=\widehat{G}-\{V'_2,C'_2\}.$ 
\STATE 4)  $V'_1=\{v\in \widehat{V}: \text{deg}(v)=1\}$, $C'_1=\{c\in \widehat{C}: \text{c is connected to some } v \in V'_1 \}$. 
\STATE 5)  $\widehat{G}=\widehat{G}-\{V'_1, C'_1\}$ (delete nodes and edges connected to them).
\ENDWHILE
\STATE 6) $RED(G)=\widehat{G}$.
 \end{algorithmic}
\end{algorithm}
\subsubsection{Double Exponential Fall}
We will now show that $x^t$ and $y^t$ corresponding to each edge $e\in   RED(G)$ fall double exponentially in the density evolution analysis of protograph $G$. For an edge $e \in E_{RED(G)}$, observe that  $E_{c_e} \subset E_{RED(G)}$.  So, from \eqref{eq:6}, it is easy to see that if $x_e^t$ falls double exponentially for all $e \in E_{RED(G)}$, then $y^t_e$ will fall double exponentially for all $e \in E_{RED(G)}$. So, it is enough to show $x^t_e$ for all $e \in RED(G)$ falls double exponentially with $t$.
\begin{theorem}
 Let  $E_{RED(G)}$  denote edges of $RED(G)$. Let
$\overline{x}^t=\max\limits_{e\in E_{RED(G)}}x^t(e)$, where $x^t(e)$ is the erasure probability along edge $e$ in the density evolution recursion of $G$. If $RED(G)$ is non-empty,  
then $E_{RED(G)} = D_{xy}.$
\label{thm:double-expon-fall}
\end{theorem}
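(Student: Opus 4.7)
The claim is a set equality, so I argue $E_{RED(G)}\subseteq D_{xy}$ and $D_{xy}\subseteq E_{RED(G)}$ separately, working throughout in the regime $\epsilon<\e_{\th}$ so that density evolution on $G$ drives every $x^t_e$ and $y^t_e$ to zero.

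For the easy direction $D_{xy}\subseteq E_{RED(G)}$, equivalently $E\setminus E_{RED(G)}\subseteq \overline D_x\cup \overline D_y$, I trace Algorithm~\ref{alg:deg2node_condition}. An edge $e\notin E_{RED(G)}$ is deleted at some iteration of the while-loop, so immediately before that deletion it lies in $\widehat E_1\cup\widehat E_2$ for the current subgraph $\widehat G$. Lemma~\ref{lemma:not_dex_fall} (applied with $\widehat G=G$ at the first round) and Lemma~\ref{lemma:not_dex_fall_hat(G)} (applied to any later intermediate $\widehat G$) then force $e\in\overline D_x\cup\overline D_y$, hence $e\notin D_{xy}$.

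For the harder direction $E_{RED(G)}\subseteq D_{xy}$, I exploit two termination guarantees of Algorithm~\ref{alg:deg2node_condition}: no variable in $V_{RED(G)}$ has degree one in $RED(G)$, and the subgraph of $RED(G)$ induced by its degree-two variables is a tree. A key combinatorial observation is that whenever the algorithm deletes a variable it also deletes every check incident to it, so each surviving check $c_j\in C_{RED(G)}$ retains \emph{all} of its original $G$-edges and every such edge still lies in $E_{RED(G)}$. Consequently, for $e\in E_{RED(G)}$ the check-side recursion $y^t_e=h_{c_e,n}(\mathbf{x}^{t-1}_{c_e,\sim n})$ depends only on $x$-values on edges in $E_{RED(G)}$, yielding $y^t_e\leq (d_{c_e}-1)\,\overline x^{t-1}$. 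On the variable side a variable $v_e\in V_{RED(G)}$ may still have $G$-edges to checks outside $RED(G)$; trivially bounding those $y^{t-1}_{e'}$ by $1$ in the repetition-code product gives
\begin{equation*}
x^t_e\;\leq\;\epsilon\!\!\prod_{\substack{e'\sim v_e,\,e'\neq e\\ e'\in E_{RED(G)}}}\! y^{t-1}_{e'},\qquad e\in E_{RED(G)}.
\end{equation*}
At a variable with degree $\geq 3$ in $RED(G)$ this product contains at least two factors, squaring $\overline y^{t-1}:=\max_{e'\in E_{RED(G)}} y^{t-1}_{e'}$, while at a degree-two (in $RED(G)$) variable it contains only one factor and merely passes through. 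Because the degree-two subgraph of $RED(G)$ is a tree, every length-$|V_{RED(G)}|$ walk on $E_{RED(G)}$ must meet a variable of degree $\geq 3$ in $RED(G)$, so one squaring step occurs at least every $|V_{RED(G)}|$ iterations. The \cite{Pradhan}-style induction then delivers $\overline x^t=O(\exp(-\beta 2^{\alpha t}))$, and via the SPC bound $\overline y^t$ inherits the same rate.

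The main technical obstacle is precisely the passage to this upper-bound recursion: $\overline x^t$ does not satisfy an autonomous recursion because density evolution is still running on all of $G$, and the \cite{Pradhan} induction requires an ignition bound $\overline x^t<\delta$ for suitably small $\delta$ to start. That ignition is supplied here not by an intrinsic threshold of $RED(G)$ — which could in principle be smaller than $\e_{\th}$ — but by the ambient assumption $\epsilon<\e_{\th}$, which forces $x^t_e\to 0$ on every edge of $G$ and hence $\overline x^t<\delta$ for all $t$ beyond some $T_0$. Verifying that the upper-bound recursion legitimately substitutes for standalone density evolution on $RED(G)$ inside the \cite{Pradhan} analysis is the one step where care is needed; everything else is mechanical.
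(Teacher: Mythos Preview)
Your proposal is correct and follows essentially the same route as the paper's proof: the easy inclusion via Lemmas~\ref{lemma:not_dex_fall} and~\ref{lemma:not_dex_fall_hat(G)} traced through Algorithm~\ref{alg:deg2node_condition}, the check-side closure observation $E_{c_e}\subset E_{RED(G)}$ for $e\in E_{RED(G)}$ (which the paper states just before Theorem~\ref{thm:double-expon-fall}), the trivial upper bound $y^{t}_{e'}\le 1$ on variable-side edges outside $RED(G)$, the squaring-every-$O(|v_2|)$-iterations argument from the degree-two-subgraph-is-a-tree condition, and the ignition via $\epsilon<\e_{\th}$. The paper carries out the squaring step by an explicit exponent-tracking induction on the quantities $a(l,e)$ and $b(l,e)$ rather than your walk argument, but these are two phrasings of the same mechanism from \cite{Pradhan}.
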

\begin{proof}
See Section \ref{proof_thm2} for the proof.
\end{proof}
In Theorem \ref{thm:double-expon-fall}, we have shown that  $ E_{RED(G)} \subset D_{xy}.$   
%In protograph $G$, let $E_{v_e}$ and $E_{c_e}$ denotes the set of edges incident on $v_e$ and $c_e$,  respectively.  Let $\hat{E}_{v_e}$ and $\hat{E}_{c_e}$ denotes the same for any subgraph $\hat{G}$ of protograph $G$. 
If $e \in D_{xy}$, then $x^t_{e'}$ corresponding to edge $e' \in \{E_{v_e} \setminus e\}$ falls double exponentially.   
So, $\{E_{v_e}\setminus e\} \subset D_{x}$ for each $e \in E_{RED(G)}$. Now consider an edge $e \notin  E_{RED(G)}. $ We have $$y^t_e=1-\prod_{e' \in E_{c_e}\setminus e}(1-x^{t-1}_{e'}).$$ If $\{E_{c_e}\setminus e\} \subset D_x$, then $e \in D_y$. Using the above two steps,  we will find $D_y$  and $D_x$ from $E_{RED(G)}$ by using Algorithm \ref{alg:deg2node_condition_dex}. Let $r_e^t$ and $s_e^t$ denote messages on edge $e$ in $t$-th iteration from check node to variable node  and variable node to check node, respectively.
%\newpage
\begin{algorithm}

\begin{algorithmic}
\label{alg:deg2node_condition_dex}
\STATE 
%\begin{enumerate}
\STATE 1)  If $e \in E_{RED(G)}$, then initialize $r_e^0 =1,$  otherwise $r_e^0=0.$ 
%\end{enumerate}
\STATE 2) For  $e \in E$, if $\exists e'\in \{E_{v_e} \setminus e\}$ such that $r_{e'}^t=1$, then $s_e^t=1.$ 
\STATE 3) For  $e \in E$,  if $ s_{e'}^t=1$ $\forall e'\in \{E_{c_e} \setminus e\}$, then $r_{e}^{t+1}=1.$
\STATE 4) Continue 2 and 3 till $r^t_e =r^{t-1}_e.$ and $s^t_e=s^{t-1}_{e}.$
\STATE 5) $D_x=\{e\in E: s_e^t=1\}$ and $D_y=\{e\in E: r^t_e=1\}.$
\end{algorithmic}
\end{algorithm}
 Algorithm \ref{alg:deg2node_condition} and \ref{alg:deg2node_condition_dex} are illustrated through the following examples.
 \begin{example}
  Consider the rate-$1/4$ LDPC protograph $G$ in Fig. \ref{fig:wgprotograph for block-error-condition}. $G$ has a degree-one variable node $v_1$. After removing $v_1$, and the check node $c_1$ connected to $v_1$  from  $G$, we get the reduced protograph shown in  Fig. \ref{fig:ex3_without_deg1node}.  Removal of edges connected to $v_1$ and $c_1$ reduces degree of variable node $v_2$ to  one. Removal of newly introduced degree-one variable node $v_2$ and check node connected to it  introduces a cycle $v_4c_4$ formed by degree-two variable nodes. Removal of loop $v_4c_4$ from Fig. \ref{fig:ex3_reduced_graph} results in a empty $RED(G)$. So, $D_{xy}=D_y=D_x=\emptyset$.
 \begin{figure*}[t]
%\begin{subfigure}[a]{0.47\textwidth}
\subcaptionbox{Protograph $G$. \label{fig:wgprotograph for block-error-condition}}[.38\linewidth][c]{
\centering
\begin{tikzpicture}[every node/.style={scale=0.7}]
\begin{scope}[node distance=2cm,>=angle 90,semithick]
\node[cnode] (v1) {$v_1$};
\node[cenode] (c1)[right of=v1] {$c_1$};
\node[cnode] (v2)[right of=c1] {$v_2$};
%\node[cenode] (c2)[below of=v2] {$c_2$};
%\node[cnode] (v3)[below of=c2] {$v_3$};
\node[cenode] (c4)[right of=v2] {$c_3$};
\node[cnode] (v4)[right of=c4] {$v_3$};
\node[cnode] (v5)[above of=v4] {$v_4$};
\node[cenode] (c5)[above of=c4] {$c_4$};
\draw[black] (v1) -- (c1);
\draw[black] (c1.10) -- (v2.170);
\draw[black] (c1.350) -- (v2.190);
%\draw[black] (v2) -- (c2);
%\draw[black] (v2.260) -- (c2.100);
%\draw[black] (v2.280) -- (c2.80);
\draw[black] (v2.0) -- (c4.180);
%\draw[black] (v2.350) -- (c4.190);
%\draw[black] (c2) -- (v3);
\draw[black] (c4.0) -- (v4.180);
\draw[black] (c4.340) -- (v4.200);
\draw[black] (c4.325) -- (v4.220);
\draw[black] (c5.0) -- (v5.170);
\draw[black] (c5.20) -- (v5.150);
\draw[black] (c4.35) -- (v5.180);
\draw[black] (c4.20) -- (v5.200);
%\draw[black] (c5.340) -- (v4.160);
%\draw[black] (v2.30) -- (c5.220);
\end{scope}    
\end{tikzpicture}
%\caption{Example of a protograph which satisfy block-error threshold condition.}
%\label{fig:protograph for block-error-condition}
}
%\end{subfigure}
%\begin{subfigure}[b]{0.57\textwidth}
\subcaptionbox{$\widehat{G}$ after first iteration of Algorithm  \ref{alg:deg2node_condition}\label{fig:ex3_without_deg1node}.}[.35\linewidth][c]{
\centering
\begin{tikzpicture}[every node/.style={scale=0.7}]
\begin{scope}[node distance=2cm,>=angle 90,semithick]
\node[cnode] (v2) {$v_2$};
\node[cenode] (c4)[right of=v2] {$c_3$};
\node[cnode] (v4)[right of=c4] {$v_3$};
\node[cenode] (c5)[above of=c4] {$c_4$};
\node[cnode] (v5)[right of=c5] {$v_4$};

%\draw[black] (c4.340) -- (v4.200);
\draw[black] (c5.0) -- (v5.170);
\draw[black] (c5.20) -- (v5.150);
\draw[black] (c4.20) -- (v5.200);
%\draw[black] (c5.340) -- (v4.160);
\draw[black] (c4.35) -- (v5.180);
%\draw[black] (v2.30) -- (c5.220);
\draw[black] (v2.10) -- (c4.170);
%\draw[black] (v2.350) -- (c4.190);
\draw[black] (c4.0) -- (v4.180);
\draw[black] (c4.340) -- (v4.200);
\draw[black] (c4.325) -- (v4.220);
\end{scope}    
\end{tikzpicture}%\label{eq:8}
%\caption{Subgraph of protograph obtained from algorithm-\ref{deg2node_condition}.}
}
%\end{subfigure}
%\begin{subfigure}[b]{0.57\textwidth}
\subcaptionbox{$\widehat{G}.$ after second iteration.\label{fig:ex3_reduced_graph}}[.25\linewidth][c]{
\centering
\begin{tikzpicture}[every node/.style={scale=0.7}]
\begin{scope}[node distance=2cm,>=angle 90,semithick]
\node[cenode] (c5) {$c_4$};
%\node[cnode] (v5)[above of=v4] {$v_4$};
\node[cnode] (v5)[below of=c5] {$v_4$};
\draw[black] (c5.260) -- (v5.100);
\draw[black] (c5.290) -- (v5.70);
%\draw[black] (c5.340) -- (v5.200);
\end{scope}    
\end{tikzpicture}
%\caption{Subgraph of protograph obtained from algorithm-\ref{deg2node_condition}.}
%\label{fig:reduced_graph}%\label{eq:8}
}
%\end{subfigure}
\caption{$RED(G):$ Empty.}
\label{example:double_exponential fall}
\end{figure*}
 \end{example}
 
  \begin{example}
Consider the rate-$1/4$ protograph ($G$)  in Fig. \ref{fig:protograph for block-error-condition}. After removal of degree-one variable node $v_1$ and its neighboring check node $c_1$, we get Fig. \ref{fig:without_deg1node}. Removal of $v_1$ and $c_1$ introduces a degree-one variable node $v_2$ in Fig.  \ref{fig:without_deg1node}. After  removal of $v_2$, we get Fig.   \ref{fig:reduced_graph}, which does not have a variable node with  degree $\leq 2$. Hence, $x^t$ and $y^t$ for all edges in Fig. \ref{fig:reduced_graph} have double exponential fall property in density evolution analysis of $G$. Algorithm \ref{alg:deg2node_condition_dex} is illustrated through Fig. \ref{example:double_exponential fall_alg2}.  Algorithm \ref{alg:deg2node_condition_dex} starts by assigning $r_e^0=1$ for $e \in E_{RED(G)}$ and $r_e^0=0$ for $e \in E \setminus E_{RED(G)}$. 
%Since $\{e\in E: r_e^0=1\} \cap E_{v_3} = \{v_3c_3\}$, where $v_3c_3$ are edges between $v_3$ and $c_3$,  $s_e^0=1$ for $e \in E_{v_3}$.  Similarly, it can be shown that $s_e^0=1$ for $%\label{eq:8}e\in E_{v_4}.$ 
In Fig. \ref{example:double_exponential fall_alg2}, edges are labeled with   messages carried by them. Arrow indicates the direction of message in an edge. After end of Algorithm \ref{alg:deg2node_condition_dex}, we get $D_y= \{E_{RED(G)},v_2c_2,v_1c_1\}.$ 
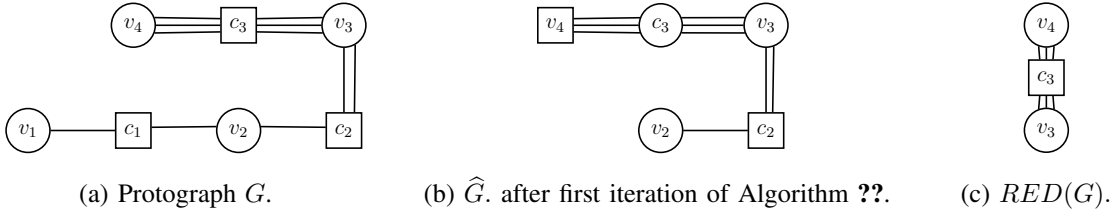
\begin{figure*}[t]
%\begin{subfigure}[a]{0.47\textwidth}
\subcaptionbox{Protograph $G$. \label{fig:protograph for block-error-condition}}[.35\linewidth][c]{
\centering
\begin{tikzpicture}[every node/.style={scale=0.7}]
\begin{scope}[node distance=2cm,>=angle 90,semithick]
\node[cnode] (v1) {$v_1$};
\node[cenode] (c1)[right of=v1] {$c_1$};
\node[cnode] (v2)[right of=c1] {$v_2$};
%\node[cenode] (c2)[below of=v2] {$c_2$};
%\node[cnode] (v3)[below of=c2] {$v_3$};
\node[cenode] (c2)[right of=v2] {$c_2$};
%\node[cnode] (v4)[right of=c4] {$v_3$};
\node[cnode] (v3)[above of=c2] {$v_3$};
\node[cenode] (c3)[left of=v3] {$c_3$};
\node[cnode] (v4)[left of=c3] {$v_4$};
\draw[black] (v1) -- (c1);%\label{eq:8}
\draw[black] (c1.10) -- (v2.170);
%\draw[black] (c1.350) -- (v2.190);
%\draw[black] (v2) -- (c2);
%\draw[black] (v2.260) -- (c2.100);
%\draw[black] (v2.280) -- (c2.80);
\draw[black] (v2.10) -- (c2.170);
%\draw[black] (v2.350) -- (c4.190);
%\draw[black] (c2) -- (v3);
%\draw[black] (c4.0) -- (v4.180);
%\draw[black] (c4.340) -- (v4.200);
\draw[black] (c3.0) -- (v3.180);
\draw[black] (c3.20) -- (v3.160);
\draw[black] (c3.-20) -- (v3.200);
\draw[black] (c2) -- (v3);
\draw[black] (c2.60) -- (v3.300);
\draw[black] (v4.0) -- (c3.180);
\draw[black] (v4.20) -- (c3.160);
\draw[black] (v4.-20) -- (c3.200);
%\draw[black] (c5.340) -- (v4.160);
%\draw[black] (v2.30) -- (c5.220);
\end{scope}    
\end{tikzpicture}
%\caption{Example of a protograph which satisfy block-error threshold condition.}
%\label{fig:protograph for block-error-condition}
}
%\end{subfigure}
%\begin{subfigure}[b]{0.57\textwidth}
\subcaptionbox{$\widehat{G}.$ after first iteration of Algorithm \ref{alg:deg2node_condition}.\label{fig:without_deg1node}}[.4\linewidth][c]{
\centering
\begin{tikzpicture}[every node/.style={scale=0.7}]
\begin{scope}[node distance=2cm,>=angle 90,semithick]
\node[cnode] (v2) {$v_2$};
\node[cenode] (c2)[right of=v2] {$c_2$};
\node[cnode] (v3)[above of=c2] {$v_3$};
\node[cnode] (c3)[left of=v3] {$c_3$};
\node[cenode] (v4)[left of=c3] {$v_4$};
\draw[black] (c3.0) -- (v3.180);
\draw[black] (c3.20) -- (v3.160);
\draw[black] (c3.-20) -- (v3.200);
\draw[black] (c2) -- (v3);
\draw[black] (c2) -- (v2);
\draw[black] (c2.70) -- (v3.290);
\draw[black] (v4.0) -- (c3.180);
\draw[black] (v4.20) -- (c3.160);
\draw[black] (v4.-20) -- (c3.200);
\end{scope}    
\end{tikzpicture}
%\caption{Subgraph of protograph obtained from algorithm-\ref{deg2node_condition}.}
}
%\end{subfigure}
%\begin{subfigure}[b]{0.57\textwidth}
\subcaptionbox{$RED(G).$ \label{fig:reduced_graph}}[.2\linewidth][c]{
\centering
\begin{tikzpicture}[every node/.style={scale=0.7}]
\begin{scope}[node distance=1cm,>=angle 90,semithick]
\node[cenode] (c3) {$c_3$};
\node[cnode] (v4)[above of=c3] {$v_4$};
\node[cnode] (v3)[below of=c3] {$v_3$};
\draw[black] (v3.90) -- (c3.270);
\draw[black] (v3.110) -- (c3.250);
\draw[black] (v3.70) -- (c3.290);
\draw[black] (c3.90) -- (v4.270);
\draw[black] (c3.110) -- (v4.250);
\draw[black] (c3.70) -- (v4.290);
\end{scope}    
\end{tikzpicture}
%\caption{Subgraph of protograph obtained from algorithm-\ref{deg2node_condition}.}
%\label{fig:reduced_graph}
}
\caption{$RED(G)$: Non empty.}
\label{example:double_exponential fall 2}
\end{figure*}
\begin{figure*}[t]
%\begin{subfigure}[a]{0.47\textwidth}
\subcaptionbox{$r_e^0$ for iteration $0$. \label{fig:re0}}[.3\linewidth][c]{
\centering
\begin{tikzpicture}[every node/.style={scale=0.5}]
\begin{scope}[node distance=2cm,>=angle 90,semithick]
\node[cnode] (v1) {$v_1$};
\node[cenode] (c1)[right of=v1] {$c_1$};
\node[cnode] (v2)[right of=c1] {$v_2$};
%\node[cenode] (c2)[below of=v2] {$c_2$};
%\node[cnode] (v3)[below of=c2] {$v_3$};
\node[cenode] (c2)[right of=v2] {$c_2$};
%\node[cnode] (v4)[right of=c4] {$v_3$};
\node[cnode] (v3)[above of=c2] {$v_3$};
\node[cenode] (c3)[left of=v3] {$c_3$};
\node[cnode] (v4)[left of=c3] {$v_4$};
\draw[middlearrow={<},black] (v1) --node[near end,above]{\large $0$} (c1);
\draw[middlearrow={>},black] (c1.10) --node[near end,above]{\large $0$}(v2.170);
%\draw[black] (c1.350) -- (v2.190);
%\draw[black] (v2) -- (c2);
%\draw[black] (v2.260) -- (c2.100);
%\draw[black] (v2.280) -- (c2.80);
\draw[middlearrow={<},black] (v2.10) --node[near end,above]{\large $0$} (c2.170);
%\draw[black] (v2.350) -- (c4.190);
%\draw[black] (c2) -- (v3);
%\draw[black] (c4.0) -- (v4.180);
%\draw[black] (c4.340) -- (v4.200);
\draw[middlearrow={>},black] (c3.0) -- (v3.180);
\draw[middlearrow={>},black] (c3.20)--node[near end,above]{\large $1$} (v3.160);
\draw[middlearrow={>},black] (c3.-20) -- (v3.200);
\draw[middlearrow={>},black] (c2) -- (v3);
\draw[middlearrow={>},black] (c2.60) -- node[near end,right]{\large $0$} (v3.300);
\draw[middlearrow={<},black] (v4.0) --node[near start,above]{\large $1$} (c3.180);
\draw[middlearrow={<},black] (v4.20) -- (c3.160);
\draw[middlearrow={<},black] (v4.-20) -- (c3.200);
%\draw[black] (c5.340) -- (v4.160);
%\draw[black] (v2.30) -- (c5.220);
\end{scope}    
\end{tikzpicture}
}
\subcaptionbox{$s_e^0$ for iteration $0$. \label{fig:se0}}[.3\linewidth][c]{
\centering
\begin{tikzpicture}[every node/.style={scale=0.5}]
\begin{scope}[node distance=2cm,>=angle 90,semithick]
\node[cnode] (v1) {$v_1$};
\node[cenode] (c1)[right of=v1] {$c_1$};
\node[cnode] (v2)[right of=c1] {$v_2$};
%\node[cenode] (c2)[below of=v2] {$c_2$};
%\node[cnode] (v3)[below of=c2] {$v_3$};
\node[cenode] (c2)[right of=v2] {$c_2$};
%\node[cnode] (v4)[right of=c4] {$v_3$};
\node[cnode] (v3)[above of=c2] {$v_3$};
\node[cenode] (c3)[left of=v3] {$c_3$};
\node[cnode] (v4)[left of=c3] {$v_4$};
\draw[middlearrow={>},black] (v1) --node[near end,above]{\large $0$} (c1);
\draw[middlearrow={<},black] (c1.10) --node[near end,above]{\large $0$} (v2.170);
%\draw[black] (c1.350) -- (v2.190);
%\draw[black] (v2) -- (c2);
%\draw[black] (v2.260) -- (c2.100);
%\draw[black] (v2.280) -- (c2.80);
\draw[middlearrow={>},black] (v2.10) --node[near end,above]{\large $0$} (c2.170);
%\draw[black] (v2.350) -- (c4.190);
%\draw[black] (c2) -- (v3);
%\draw[black] (c4.0) -- (v4.180);
%\draw[black] (c4.340) -- (v4.200);
\draw[middlearrow={<},black] (c3.0) -- (v3.180);
\draw[middlearrow={<},black] (c3.20) --node[near end,above]{\large $1$} (v3.160);
\draw[middlearrow={<},black] (c3.-20) -- (v3.200);
\draw[middlearrow={<},black] (c2) -- (v3);
\draw[middlearrow={<},black] (c2.60) -- node[near end,right]{\large $1$} (v3.300);
\draw[middlearrow={>},black] (v4.0) -- (c3.180);
\draw[middlearrow={>},black] (v4.20) -- node[near end,above]{$1$}(c3.160);
\draw[middlearrow={>},black] (v4.-20) -- (c3.200);
%\draw[black] (c5.340) -- (v4.160);
%\draw[black] (v2.30) -- (c5.220);
\end{scope}    
\end{tikzpicture}
%\caption{Subgraph of protograph obtained from algorithm-\ref{deg2node_condition}.}
}
\subcaptionbox{$r_e^1$ for iteration $1$. \label{fig:re1}}[.3\linewidth][c]{
\centering
\begin{tikzpicture}[every node/.style={scale=0.5}]
\begin{scope}[node distance=2cm,>=angle 90,semithick]
\node[cnode] (v1) {$v_1$};
\node[cenode] (c1)[right of=v1] {$c_1$};
\node[cnode] (v2)[right of=c1] {$v_2$};
%\node[cenode] (c2)[below of=v2] {$c_2$};
%\node[cnode] (v3)[below of=c2] {$v_3$};
\node[cenode] (c2)[right of=v2] {$c_2$};
%\node[cnode] (v4)[right of=c4] {$v_3$};
\node[cnode] (v3)[above of=c2] {$v_3$};
\node[cenode] (c3)[left of=v3] {$c_3$};
\node[cnode] (v4)[left of=c3] {$v_4$};
\draw[middlearrow={<},black] (v1) -- node[near end,above]{\large $0$} (c1);
\draw[middlearrow={>},black] (c1.10) -- node[near end,above]{\large $0$} (v2.170);
%\draw[black] (c1.350) -- (v2.190);
%\draw[black] (v2) -- (c2);
%\draw[black] (v2.260) -- (c2.100);
%\draw[black] (v2.280) -- (c2.80);
\draw[middlearrow={<},black] (v2.10) -- node[near end,above]{\large $1$} (c2.170);
%\draw[black] (v2.350) -- (c4.190);
%\draw[black] (c2) -- (v3);
%\draw[black] (c4.0) -- (v4.180);
%\draw[black] (c4.340) -- (v4.200);
\draw[middlearrow={>},black] (c3.0) -- (v3.180);
\draw[middlearrow={>},black] (c3.20) -- node[near end,above]{\large $1$} (v3.160);
\draw[middlearrow={>},black] (c3.-20) -- (v3.200);
\draw[middlearrow={>},black] (c2) -- (v3);
\draw[middlearrow={>},black] (c2.60) -- node[near end,right]{\large $0$} (v3.300);
\draw[middlearrow={>},black] (v4.0) -- (c3.180);
\draw[middlearrow={>},black] (v4.20) -- node[near end,above]{\large $1$} (c3.160);
\draw[middlearrow={>},black] (v4.-20) -- (c3.200);
%\draw[black] (c5.340) -- (v4.160);
%\draw[black] (v2.30) -- (c5.220);
\end{scope}    
\end{tikzpicture}
%\caption{Subgraph of protograph obtained from algorithm-\ref{deg2node_condition}.}
}
\subcaptionbox{$s_e^1$ for iteration $1$. \label{fig:se1}}[.3\linewidth][c]{
\centering
\begin{tikzpicture}[every node/.style={scale=0.5}]
\begin{scope}[node distance=2cm,>=angle 90,semithick]
\node[cnode] (v1) {$v_1$};
\node[cenode] (c1)[right of=v1] {$c_1$};
\node[cnode] (v2)[right of=c1] {$v_2$};
%\node[cenode] (c2)[below of=v2] {$c_2$};
%\node[cnode] (v3)[below of=c2] {$v_3$};
\node[cenode] (c2)[right of=v2] {$c_2$};
%\node[cnode] (v4)[right of=c4] {$v_3$};
\node[cnode] (v3)[above of=c2] {$v_3$};
\node[cenode] (c3)[left of=v3] {$c_3$};
\node[cnode] (v4)[left of=c3] {$v_4$};
\draw[middlearrow={>},black] (v1) -- node[near end,above]{\large $0$} (c1);
\draw[middlearrow={<},black] (c1.10) -- node[near end,above]{\large $1$}(v2.170);
%\draw[black] (c1.350) -- (v2.190);
%\draw[black] (v2) -- (c2);
%\draw[black] (v2.260) -- (c2.100);
%\draw[black] (v2.280) -- (c2.80);
\draw[middlearrow={>},black] (v2.10) -- node[near end,above]{\large $0$} (c2.170);
%\draw[black] (v2.350) -- (c4.190);
%\draw[black] (c2) -- (v3);
%\draw[black] (c4.0) -- (v4.180);
%\draw[black] (c4.340) -- (v4.200);
\draw[middlearrow={<},black] (c3.0) -- (v3.180);
\draw[middlearrow={<},black] (c3.20) --node[near end,above]{\large $1$} (v3.160);
\draw[middlearrow={<},black] (c3.-20) -- (v3.200);
\draw[middlearrow={<},black] (c2) -- (v3);
\draw[middlearrow={<},black] (c2.60) --node[near end,right]{\large $1$} (v3.300);
\draw[middlearrow={>},black] (v4.0) -- (c3.180);
\draw[middlearrow={>},black] (v4.20) --node[near end,above]{\large $1$} (c3.160);
\draw[middlearrow={>},black] (v4.-20) -- (c3.200);
%\draw[black] (c5.340) -- (v4.160);
%\draw[black] (v2.30) -- (c5.220);
\end{scope}    
\end{tikzpicture}
%\caption{Subgraph of protograph obtained from algorithm-\ref{deg2node_condition}.}
}
\subcaptionbox{$r_e^2$ for iteration $2$. \label{fig:re2}}[.3\linewidth][c]{
\centering
\begin{tikzpicture}[every node/.style={scale=0.5}]
\begin{scope}[node distance=2cm,>=angle 90,semithick]
\node[cnode] (v1) {$v_1$};
\node[cenode] (c1)[right of=v1] {$c_1$};
\node[cnode] (v2)[right of=c1] {$v_2$};
%\node[cenode] (c2)[below of=v2] {$c_2$};
%\node[cnode] (v3)[below of=c2] {$v_3$};
\node[cenode] (c2)[right of=v2] {$c_2$};
%\node[cnode] (v4)[right of=c4] {$v_3$};
\node[cnode] (v3)[above of=c2] {$v_3$};
\node[cenode] (c3)[left of=v3] {$c_3$};
\node[cnode] (v4)[left of=c3] {$v_4$};
\draw[middlearrow={<},black] (v1) -- node[near end,above]{\large $1$} (c1);%\label{eq:8}
\draw[middlearrow={>},black] (c1.10) -- node[near end,above]{\large $0$} (v2.170);
%\draw[black] (c1.350) -- (v2.190);
%\draw[black] (v2) -- (c2);
%\draw[black] (v2.260) -- (c2.100);
%\draw[black] (v2.280) -- (c2.80);
\draw[middlearrow={<},black] (v2.10) -- node[near end,above]{\large $1$}(c2.170);
%\draw[black] (v2.350) -- (c4.190);
%\draw[black] (c2) -- (v3);
%\draw[black] (c4.0) -- (v4.180);
%\draw[black] (c4.340) -- (v4.200);
\draw[middlearrow={>},black] (c3.0) -- (v3.180);
\draw[middlearrow={>},black] (c3.20) -- node[near end,above]{\large $1$} (v3.160);
\draw[middlearrow={>},black] (c3.-20) -- (v3.200);
\draw[middlearrow={>},black] (c2) -- (v3);
\draw[middlearrow={>},black] (c2.60) -- node[near end,right]{\large $0$} (v3.300);
\draw[middlearrow={>},black] (v4.0) -- (c3.180);
\draw[middlearrow={>},black] (v4.20) -- node[near end,above]{\large $1$} (c3.160);
\draw[middlearrow={>},black] (v4.-20) -- (c3.200);
%\draw[black] (c5.340) -- (v4.160);
%\draw[black] (v2.30) -- (c5.220);
\end{scope}    
\end{tikzpicture}
%\caption{Subgraph of protograph obtained from algorithm-\ref{deg2node_condition}.}
}

\caption{Illustration of Algorithm \ref{alg:deg2node_condition_dex}.}
\label{example:double_exponential fall_alg2}
\end{figure*}
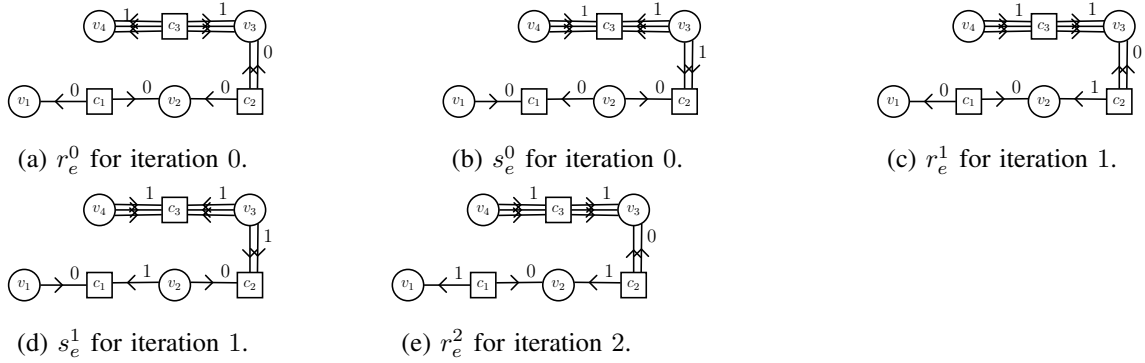

\end{example}

\subsubsection{Block-error threshold}
We now use Theorem \ref{thm:double-expon-fall} and its generalized version to a sequence of large girth liftings of a  protograph $G$ and state conditions for block-error threshold property. Let us denote the set of variable nodes of $G$ for which $P_{b}(v)$ falls double exponentially by $DEX(V)$. Let $\overline{G}$ be the code lifted from protograph $G$ with blocklength $n$ and message length $k$. Let us define $\overline{P_b}$ as $\overline{P_b}=\max_{v \in V_I}P_b(v)$, where $V_I$ is a set of variable nodes corresponding to message bits. Probability of block error can be bounded as $P_B < k\overline{P_b}$. Let $V_D$ be the variable nodes in $\overline{G}$ corresponding to $DEX(V)$. 
\begin{theorem}
\label{thm:block_error_threshold}
In the notation introduced above, if $V_I \subset V_D$, and girth of $\overline{G}$ is at least $c\log n$, then 
 $$P_B \leq k \mathcal{O}(\exp(-\beta n^{\alpha})),$$ where $\alpha, \beta, c$ are positive constants.  
\end{theorem}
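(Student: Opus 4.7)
The plan is to follow the two-step template of \cite{Pradhan}: first, lift the per-edge double-exponential fall of Theorem \ref{thm:double-expon-fall} (together with the propagation logic that justifies Algorithm \ref{alg:deg2node_condition_dex}) to a per-node double-exponential fall of the protograph bit-error probability $P_b(v)$ at every $v\in DEX(V)$; second, use the girth hypothesis to transfer that density-evolution statement to the lifted graph $\overline{G}$, and close with a union bound over the message-bit variable nodes.

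For the first step, I would fix any $v\in DEX(V)$ and locate an incoming edge $e\in E_v$ that Algorithm \ref{alg:deg2node_condition_dex} marks with $r_e^t=1$; by Theorem \ref{thm:double-expon-fall} and the definition of $D_y$, such an edge satisfies $y_e^t=\mathcal{O}(\exp(-\beta 2^{\alpha t}))$ in the density evolution recursion for $G$. The extended MAP combining that produces $P_b(v)$ from the channel evidence and the $d_v$ incoming check messages is monotone in the input erasure probabilities, so the extrinsic erasure probability on any information bit of $v$ can only improve when the message on $e$ is revealed. This yields $P_b(v)\leq C\,y_e^t$ for a constant $C$ depending only on $(d_{v},k_{v})$, and hence $P_b(v)$ itself falls double-exponentially in $t$.

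For the second step, I would invoke the standard fact that on a lifted graph of girth $g$ the depth-$t$ computation tree at any variable node is a genuine tree as long as $t\leq g/2-1$, so protograph density evolution gives the exact iterative-decoder erasure probabilities for all such $t$. With $g\geq c\log n$, taking $t=\Theta(\log n)$ converts $\exp(-\beta 2^{\alpha t})$ into $\exp(-\beta' n^{\alpha'})$ for suitable positive constants (with $\alpha'=\alpha c/(2\ln 2)$, for instance). Because the hypothesis $V_I\subset V_D$ guarantees that every message-bit variable node in $\overline{G}$ is a copy of some $v\in DEX(V)$, the per-node bound lifts directly, and a union bound over the $k$ message bits gives
\begin{equation*}
P_B \;\leq\; \sum_{v\in V_I}P_b(v) \;\leq\; k\,\overline{P_b} \;=\; k\cdot\mathcal{O}\bigl(\exp(-\beta n^{\alpha})\bigr),
\end{equation*}
which is the claim.

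The main obstacle I anticipate is rigorously justifying the per-node inequality $P_b(v)\leq C\,y_e^t$ in the DGLDPC setting, where $v$ may enforce a non-trivial $(d_{v},k_{v})$ component code and combines its inputs through extrinsic MAP on the extended code $[\mathcal{G}\,|\,I_{k_{v}}]$. One must verify, using the support-weight / information-function expressions $h_{v,m}$ of Section \ref{sec:density_evolution_BEC}, that the extrinsic erasure probability on each information bit is monotone nondecreasing in every input erasure probability, so that forcing the input on $e$ to zero can only decrease $P_b(v)$. Once this monotonicity is in hand, combining Theorem \ref{thm:double-expon-fall}, the girth-based exactness of density evolution, and the union bound over message bits is routine.
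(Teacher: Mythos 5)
Your proof reconstructs the same two-step strategy the paper uses: first establish that $P_b(v)$ decays double-exponentially in $t$ for the relevant protograph nodes, then use the girth hypothesis to push density evolution to $t=\Theta(\log n)$ iterations and close with a union bound over the $k$ message bits. The paper's own proof is terser, because it treats the double-exponential fall of $P_b(v)$ as given --- $DEX(V)$ is \emph{defined} as the set of nodes with that property, and the argument that one incoming edge $e$ with $y_e^t\in D_y$ suffices is stated as an observation in the text preceding the theorem rather than inside the proof. Your attempt to make that observation rigorous is the right instinct, but the inequality $P_b(v)\leq C\,y_e^t$ needs a bit more than monotonicity alone. Monotonicity of the extrinsic erasure functions on $[0,1]$ gives $P_b(v)\leq h\bigl(\mathbf{1}_{d_v-1},\,y_e^t,\,\epsilon\mathbf{1}_{k_v}\bigr)$, but to extract the linear factor $y_e^t$ you must also know that $h$ vanishes at $y_e=0$ with all other incoming messages erased --- equivalently, that resolving the single message on $e$ determines the information bit of $v$ even when all other inputs are erasures. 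For a repetition-code variable node this holds trivially (the whole point of the degree-$\geq 1$ reduction), but for a generalized $(d_v,k_v)$ node it is a genuine condition on the component code, essentially that $h$ has no monomial free of $y_e$. You flag this as the main obstacle, which is well-placed; the paper's proof of this theorem does not address it either and leaves it folded into the definition of $DEX(V)$, so you have identified a real, if small, gap common to both arguments rather than one introduced by your proof.
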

\begin{proof}
 We know that $P_b(v)$ corresponding to $v \in DEX(V)$ fall double exponentially in density evolution of $G$, i.e.,$$P_b(v)=\mathcal{O}(\exp(-\beta2^{\alpha t}))$$ for $v \in DEX(V)$ with  $\alpha, \beta$ being positive constants. Since $V_I \subset V_D$, $\overline{P}_b(v)=\mathcal{O}(\exp(-\beta2^{\alpha t}))$ for $v \in V_I$. So, probability of block error of $\overline{G}$, denoted by $P_B$, can be bounded as follows: $$P_B \leq k\mathcal{O}(\exp(-\beta2^{\alpha t}))$$ for $\e \leq \e_{\text{th}}$.  Assuming $t < g/2$ and putting $t=c\log n$, we get $$P_B \leq k \mathcal{O}(\exp(-\beta n^{\alpha})).$$ 
\end{proof}
From above theorem, if $\e < \e_{\text{th}}$, we can deduce that 
$P_B \rightarrow 0$ as $n \rightarrow \infty$.
The rate-$1/4$ protograph in Fig. \ref{fig:protograph for block-error-condition} has one information bit and it satisfies the block-error threshold condition, because $P_b$ corresponding to degree-three variable node of $G$ fall double exponentially as described in Example \ref{example:double_exponential fall}. So, block-error threshold and bit-error threshold can be made equal for appropriate lifting size using Theorem \ref{thm:block_error_threshold}. Protographs can be lifted to have large girth ($\mathcal{O}(c\log n)$) by using the large girth construction in \cite{Pradhan}. Similarly, the rate-$1/4$ protograph in Fig. \ref{fig:wgprotograph for block-error-condition} has one information bit. However, for this protograph, block-error threshold  cannot be made equal to bit-error threshold by using Theorem \ref{thm:block_error_threshold}, because $P_b$ corresponding to any variable nodes of $G$ does not fall double exponentially. In the following example, we comment on the block-error threshold of codes in 5G standard \cite{5G}.

\begin{example} Consider the rate-$1/5$, $42 \times 52$ protograph and the rate-$1/3$, $46 \times 68$ protograph from the 5G standard\cite{5G}. Protographs corresponding to rate-$1/5$ and rate-$1/3$ have $10$ and $22$ information nodes, respectively. Base matrices of $RED(G)$ corresponding to rate-$1/5$ and rate-$1/3$ are given in \eqref{eq:REDG0.2} and \eqref{eq:REDG0.33}, respectively. Let $V_{RED(G)}$ denote the set of variable nodes in $RED(G)$. For both the protographs, observe that $|V_{RED(G)}|$ is greater than the number of variable nodes corresponding to information bits. Since $V_{RED(G)} \subseteq DEX(V)$, information bits in the lifted graph can be chosen in such a way that $V_I \subset V_D$. So, in 5G standard,  protographs corresponding to both rate-$1/3$ and rate-$1/5$ satisfy block-error threshold condition, which is derived in Theorem \ref{thm:block_error_threshold}.     
\end{example}
\begin{equation}
\label{eq:REDG0.2}
\left[
\begin{array}{*{14}c}
 1 & 1 & 1 & 1 & 0 & 0 & 1 & 0 & 0 & 1 & 1 & 1 & 0 & 0\\
 1 & 0 & 0 & 1 & 1 & 1 & 1 & 1 & 1 & 1 & 0 & 1 & 1 & 0\\
 1 & 1 & 0 & 1 & 1 & 0 & 0 & 0 & 1 & 0 & 1 & 0 & 1 & 1\\
 0 & 1 & 1 & 0 & 1 & 1 & 1 & 1 & 1 & 1 & 1 & 0 & 0 & 1
\end{array}
\right]\end{equation}
\begin{equation}
\label{eq:REDG0.33}
\left[
\begin{array}{*{26}c}
 1 & 1 & 1 & 1 & 0 & 1 & 1 & 0 & 0 & 1 & 1 & 1 & 1 & 1 & 0 & 1 & 1 & 0 & 1 & 1 & 1 & 1 & 1 & 1 & 0 & 0\\
 1 & 0 & 1 & 1 & 1 & 1 & 0 & 1 & 1 & 1 & 0 & 1 & 1 & 0 & 1 & 1 & 1 & 1 & 0 & 1 & 0 & 1 & 1 & 1 & 1 & 0\\
 1 & 1 & 1 & 0 & 1 & 1 & 1 & 1 & 1 & 1 & 1 & 0 & 0 & 1 & 1 & 1 & 0 & 1 & 1 & 1 & 1 & 0 & 0 & 0 & 1 & 1\\
 1 & 1 & 0 & 1 & 1 & 0 & 1 & 1 & 1 & 0 & 1 & 1 & 1 & 1 & 1 & 0 & 1 & 1 & 1 & 0 & 1 & 1 & 1 & 0 & 0 & 1
\end{array}
\right]\end{equation}
The block-error threshold condition for BIAWGN channel is same as block-error threshold condition for BEC and can be proved using a Bhattacharya parameter argument as in \cite[Theorem 3]{Pradhan} and \cite[Theorem 2]{PradhanISIT2016}. Readers interested in designing protographs with block-error threshold can skip the following sections and  move to Section \ref{sec:optim-dgldpc-prot}.  
\subsubsection{Description of Edges in $\overline{D}_x$, $\overline{D}_y$, $\overline{D}_{xy}$}
\label{sec:lemma1 and 2}
In the following two lemmas, we describe edges which are in $\overline{D}_x$, $\overline{D}_y$ and $\overline{D}_{xy}.$ 
\begin{lemma}
\label{lemma:not_dex_fall}
In the notation introduced above, the following are true: 
\begin{enumerate}
\item[1)] $E_1 \subseteq \overline{D}_x$.
\item[2)]  For  $e \in E_1$ and $e' \in E_{c_e}\setminus e$, $\{E_{c_e} \setminus e\} \subseteq \overline{D}_y$. 
\item[3)]  $E_2 \subseteq \overline{D}_{xy}$.
\item[4)] If $e \in E_2$, then $ E_{c_e} \subseteq \overline{D}_y$.
\end{enumerate}
\end{lemma}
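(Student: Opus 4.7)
The plan is to bound each target quantity from below by a sequence that decays at most (single) exponentially in $t$, i.e.\ of the form $\Omega(\epsilon^{O(t)})$, which automatically rules out any bound of the form $\mathcal{O}(\exp(-\beta 2^{\alpha t}))$. All four parts follow by tracing a specific $\epsilon$-factor or a cycle-induced lower bound through the standard repetition/SPC update rules in \eqref{eq:5}--\eqref{eq:7}.

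For Part 1, take $e \in E_1$ with $v_e \in V_1$. Because $v_e$ has degree one, $\mathbf{y}^t_{v_e,\sim 1}$ is empty and the repetition VN with a single channel bit collapses both \eqref{eq:5} and \eqref{eq:7} to $x^t_e = \epsilon$ for every $t \ge 0$. A constant nonzero sequence is not $\mathcal{O}(\exp(-\beta 2^{\alpha t}))$, so $e \in \overline{D}_x$. For Part 2, take $e \in E_1$ and $e' \in E_{c_e}\setminus e$. The SPC rule at $c_e$ gives
\begin{equation*}
y^t_{e'} = 1 - \prod_{e'' \in E_{c_e}\setminus e'}(1 - x^{t-1}_{e''}) \ge 1 - (1 - x^{t-1}_e) = x^{t-1}_e = \epsilon,
\end{equation*}
using Part 1 in the last equality. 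Hence $y^t_{e'} \ge \epsilon$ for all $t \ge 1$, so $e' \in \overline{D}_y$.

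For Part 3, the key step is a cycle-propagation inequality. Let $e \in E_2$ sit on a cycle $\gamma$ in $G_2$ of length $2k$, with degree-two variable nodes $v_1,\dots,v_k$ alternating with check nodes $c_1,\dots,c_k$; label the cycle edges $e_i=(v_i,c_i)$ and $f_i=(v_{i+1},c_i)$ (indices mod $k$). Two one-step inequalities are available: (a) at the degree-two VN $v_{i+1}$, both of whose edges lie on $\gamma$, the repetition rule \eqref{eq:7} gives $x^{t+1}_{e_{i+1}} = \epsilon\, y^t_{f_i}$; (b) at the SPC node $c_i$, the factorisation argument used for Part 2 gives $y^{t+1}_{f_i} \ge x^t_{e_i}$, since the omitted factors lie in $[0,1]$. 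Composing (a) and (b) once around $\gamma$ yields $x^{t+2k}_{e_1} \ge \epsilon^k\, x^t_{e_1}$. Iterating from $x^0_{e_1} = \epsilon$ produces a lower bound of the form $x^t_{e_1} \ge \epsilon^{O(t)}$, which is only single-exponential in $t$. Hence $x^t_{e_1} \ne \mathcal{O}(\exp(-\beta 2^{\alpha t}))$ and $e_1 \in \overline{D}_x$; plugging this into (b) at any CN on $\gamma$ yields the same type of lower bound on $y^t_{e_1}$, placing $e_1$ in $\overline{D}_{xy}$. Part 4 is then immediate: for any $e \in E_2$ and $e' \in E_{c_e}$, the set $E_{c_e}\setminus e'$ still contains some cycle edge $e^\star \in E_2$ (either $e$ itself when $e' \ne e$, or the other cycle edge at $c_e$ when $e' = e$), so (b) combined with Part 3 gives $y^t_{e'} \ge x^{t-1}_{e^\star} \ge \epsilon^{O(t)}$ and thus $e' \in \overline{D}_y$.

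The main obstacle is Part 3: one has to keep the cycle bookkeeping consistent through the interleaving of $x$- and $y$-updates and verify that one full trip around $\gamma$ costs exactly a factor $\epsilon^k$ and advances time by $2k$, so that the $t\mapsto\epsilon^{O(t)}$ lower bound is genuinely single-exponential rather than double-exponential. Parts 1, 2 and 4 are then short consequences of the explicit SPC/repetition forms of the update rules, once the initialization $x^0_e=\epsilon$ and the cycle bound of Part 3 are in hand.
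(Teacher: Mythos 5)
Your proposal is correct and follows essentially the same route as the paper: Parts 1 and 2 drop out of the repetition/SPC forms of \eqref{eq:7} and \eqref{eq:6} exactly as in the paper, Part 3 uses the same cycle-composition inequality (the paper phrases it as "$x^{(t+l)}_{e_i}\ge\epsilon^{l/2}x^t_{e_i}$" after $l/2$ alternating applications of \eqref{eq:6}--\eqref{eq:7}, which is your $x^{t+2k}_{e_1}\ge\epsilon^k x^t_{e_1}$ under a cleaner $e_i/f_i$ labeling), and Part 4 picks the witness edge $\tilde e\in\{E_{c_{e'}}\setminus e'\}\cap E_2$ just as you pick $e^\star$. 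The only differences are cosmetic (explicit edge naming on the cycle and spelling out the two cases $e'=e$ versus $e'\ne e$ in Part 4), so no further commentary is needed.
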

\begin{proof}
\begin{enumerate}
\item[1)] Consider $e \in E_1$. From \eqref{eq:7}, it follows that $x^t_e=\e \quad \forall t.$ So, $E_1 \subseteq \overline{D}_x.$
\item[2)] Consider $e \in E_1.$ Observe that for each $e' \in \{ E_{c_e}\setminus e\}$, $e \in \{E_{c_{e'}} \setminus e'\}.$ From \eqref{eq:6}, it follows that  for each $e' \in \{E_{c_e}\setminus e\}$
\begin{align*}
y^t_{e'}&= 1-\prod_{\bar{e} \in E_{c_{e'}}\setminus e'}(1-x^{t-1}_{\bar{e}}) \\
&\geq x^{t-1}_{e}.
\end{align*}
Since $e \in E_1$, we know from Part 1,  $x^{t-1}_e=\e$. So, $y^t_{e'} \geq \e$ and $e'\in \overline{D}_y.$
\item[3)]  Let  ${L}=e_0e_1\cdots e_{l-1}e_0$ be a cycle in ${G_2}$. Since $L$ is a cycle,  for each $e_i$ in ${L}$, either ($v_{e_i}=v_{e_{i+1}}$ and $c_{e_{i+1}}=c_{e_{i+2}}$) or ($v_{e_i}=v_{e_{i-1}}$ and $c_{e_{i-1}}=c_{e_{i-2}}$), where addition in subscript of $e$ are modulo $l$. Next, We will prove $e_i \in \overline{D}_{x}$ if $v_{e_i}=v_{e_{i+1}}$ and $c_{e_{i+1}}=c_{e_{i+2}}$. From \eqref{eq:5}-\eqref{eq:6}, it follows that 
\begin{align*}
x^{(t+l)}_{e_i} &= \e y^{(t+l)}_{e_{i+1}}\\
&\geq \e x^{(t+l-1)}_{e_{i+2}}. 
\end{align*}
By applying \eqref{eq:5} and \eqref{eq:6} $l/2$ times alternatively, it can be shown that $x^{(t+l)}_{e_i} \geq  {\e}^{\frac{l}{2}} x^t_{e_i}$ for $1 \leq i \leq l$. Similarly, it can be proved that  $e_i \in \overline{D}_{x}$ if $v_{e_i}=v_{e_{i-1}}$ and $c_{e_{i+1}}=c_{e_{i+2}}$. So, $e_i \in \overline{D_x}$. Similarly, it can be shown that $y^{(t+l)}_{e_i} \geq {\e}^{\frac{l}{2}} (y^t_{e_i}),$ which implies $e_i \in \overline{D}_{y}$.  So, $e_i \in \overline{D}_{xy}$.
\item[4)] Consider $e \in E_2$. For each $e' \in E_{c_e}$, observe that $\{E_{c_{e'}}\setminus e'\} \cap E_2 \neq \emptyset . $ Let $\tilde{e} \in \{ E_{c_{e'}}\setminus e'\} \cap E_2.$ From \eqref{eq:5}, it follows that for each $e' \in E_{c_e}$ 
\begin{align*}
y^t_{e'}&=1-\prod_{e'' \in E_{c_{e'}}\setminus e' }(1-x^{t-1}_{e''})\\
&\geq x^{t-1}_{\tilde{e}}.
\end{align*} 
Since $\tilde{e} \in E_2$, we know from Part 3 of Lemma \ref{lemma:not_dex_fall} that $x^{t-1}_{\tilde{e}}$ does not fall double exponentially with $t$. So, $y^t_{e'}$, for $e \in E_2$ and $e' \in E_{c_e}$, does not fall double exponentially with $t.$

\end{enumerate}
\end{proof}
Define $\widehat{E}= E- \{E_1 \cup E_2\}.$ Let $\widehat{G}(\widehat{V}\cup \widehat{C})$ be the subgraph of $G$ induced by edges in $\widehat{E}$. In context of double exponential fall,  $\widehat{E}_1$ and $\widehat{E}_2$ behave in same way as $E_1$ and $E_2$, which will be shown in the following lemma.  
\begin{lemma}
\label{lemma:not_dex_fall_hat(G)}
In the notation defined above
\begin{enumerate}
\item[1)]  $\widehat{E}_1 \subseteq \overline{D}_{x}$.
\item[2)] If  $e \in \widehat{E}_1$, then $\{E_{c_e} \setminus e\} \subseteq \overline{D}_y$.
\item[3)]$\widehat{E}_2 \subseteq \overline{D}_{xy}.$

\item[4)] If $e \in \widehat{E}_2$, $E_{c_e} \subseteq \overline{D}_y.$
\end{enumerate}
\end{lemma}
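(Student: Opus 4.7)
The plan is to mirror the proof of Lemma \ref{lemma:not_dex_fall} on the reduced graph $\widehat{G}$. The crucial extra observation is a structural one: whenever a variable node $v\in\widehat{V}$ has strictly smaller degree in $\widehat{G}$ than in $G$, every missing edge at $v$ must enter a check node of $C_1\cup C'_2$, since those are precisely the check nodes deleted in forming $\widehat{G}$. By Lemma \ref{lemma:not_dex_fall} parts 2 and 4, the $y^t$-messages flowing from any such deleted check node along any of its edges are bounded below by a positive constant uniformly in $t$ in the density evolution of $G$ itself (namely $\epsilon$ when the check node lies in $C_1$, and a fixed power of $\epsilon$ determined by the cycle length in $G_2$ when it lies in $C'_2$). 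These positive lower bounds are the only new input I need to port each part of Lemma \ref{lemma:not_dex_fall} to $\widehat{G}$.

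For part 1, let $e\in\widehat{E}_1$, so $v_e$ has degree one in $\widehat{G}$. The repetition variable update \eqref{eq:7} at $v_e$ gives $x^t_e=\epsilon\prod_{e'\in E_{v_e}\setminus e}y^{t-1}_{e'}$, and by the observation above every factor on the right is bounded below by a positive constant independent of $t$. Hence $x^t_e$ is itself bounded below by a positive constant, so $e\in\overline{D}_x$. Part 2 is then immediate, just as in Lemma \ref{lemma:not_dex_fall} part 2: for any $e'\in E_{c_e}\setminus e$, the check update \eqref{eq:6} gives $y^t_{e'}\ge x^{t-1}_e$, and part 1 has just shown the right-hand side does not fall double exponentially.

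For parts 3 and 4, let $e\in\widehat{E}_2$ lie on a cycle $L=e_0 e_1\cdots e_{l-1}e_0$ in $\widehat{G}_2$, so every $v_{e_i}$ on the cycle has degree two in $\widehat{G}$. At each $v_{e_i}$ the variable update factors as $x^t_{e_i}=\epsilon\,y^{t-1}_{\text{other cycle edge}}\prod_{\text{extra edges}}y^{t-1}$, where the extra-edge factors are bounded below by positive constants as in the first paragraph. At each check node of $L$, $y^{t-1}\ge x^{t-2}$ along the next cycle edge. Composing these two types of steps once around $L$ yields a recurrence $x^{t+l}_{e_0}\ge K\,x^t_{e_0}$ with $K>0$, so $x^t_{e_0}$ can decay at most exponentially in $t$ and therefore is not $O(\exp(-\beta 2^{\alpha t}))$; hence $e\in\overline{D}_x$. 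Running the same walk with the roles of $x$ and $y$ interchanged puts $e\in\overline{D}_y$, proving part 3. For part 4, the other cycle edge $\bar{e}\in\widehat{E}_2$ incident on $c_e$ satisfies $x^{t-1}_{\bar{e}}\in\overline{D}_x$ by part 3, so for every $e'\in E_{c_e}$ the update \eqref{eq:6} gives $y^t_{e'}\ge x^{t-1}_{e^*}$ for some $e^*\in\{e,\bar{e}\}\setminus\{e'\}$, placing $e'$ in $\overline{D}_y$.

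The only real obstacle is the structural claim in the first paragraph: once one verifies from the definition of $\widehat{G}$ that every missing edge at a node of $\widehat{V}$ must go to a check node in $C_1\cup C'_2$, parts 1--4 become close transcriptions of the corresponding parts of Lemma \ref{lemma:not_dex_fall}, with the earlier identities $x^t_e=\epsilon$ and $x^{t+l}_e\ge \epsilon^{l/2}x^t_e$ replaced by lower bounds of the same shape but with cycle- and graph-dependent positive constants in place of bare powers of $\epsilon$.
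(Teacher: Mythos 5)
Your structural observation is right---for a surviving variable node $v\in\widehat V$, every missing edge must point to a check node that was removed---and your overall plan (split the variable-node update at $v_e$ into a product over removed edges times a product over surviving edges, then invoke Lemma~\ref{lemma:not_dex_fall}) is exactly the paper's approach. The paper defines precisely that split, calls the removed-edge product $A^t_e$, and reduces parts~1 and~3 to the claim that $A^t_e$ does not fall double exponentially.

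The gap is in the strength of the bound you claim on the removed-edge factors. You assert that the $y^t$-messages coming out of every deleted check node are "bounded below by a positive constant uniformly in $t$," in particular "a fixed power of $\epsilon$" when the check node lies in $C'_2$. That is not what Lemma~\ref{lemma:not_dex_fall} gives you, and it is false. For a check node removed because it touches a degree-two cycle in $G_2$, part~4 of Lemma~\ref{lemma:not_dex_fall} only bounds $y^t_{e'}\ge x^{t-1}_{\tilde e}$ for some $\tilde e\in E_2$, and part~3 only gives the per-cycle inequality $x^{t+l}_{\tilde e}\ge \epsilon^{l/2}x^t_{\tilde e}$. That inequality does not lower-bound $x^t_{\tilde e}$ by any fixed power of $\epsilon$; it allows $x^t_{\tilde e}$ (and hence $y^t_{e'}$) to decay to zero exponentially in $t$. (Concretely, for an isolated length-4 cycle of degree-two variable and check nodes, the $y$-messages on the cycle shrink by a factor $\epsilon$ every two iterations.) So in your part~1 the conclusion "$x^t_e$ is itself bounded below by a positive constant" is false, and in your parts~3--4 the constant $K>0$ is actually a $t$-dependent quantity that can tend to zero.

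The final conclusions survive, but the argument has to be run with the correct, weaker input: each removed-edge factor decays at most exponentially (equivalently, lies in $\overline D_y$), hence the finite product $A^t_e$ decays at most exponentially and in particular is not $\mathcal{O}(\exp(-\beta 2^{\alpha t}))$. For part~1, $x^t_e=\epsilon A^t_e$ then immediately gives $e\in\overline D_x$. For parts~3--4 you get a recurrence of the form $x^{t+l}_{e_i}\ge A^{t}_{e_i}\,x^{t}_{e_i}$ with $A^t_{e_i}$ decaying at most exponentially, and you must argue (e.g., by iterating and comparing growth rates, or by contradiction) that this forbids double-exponential decay of $x^t_{e_i}$---it does, but this step is no longer the one-line "$K$ is a constant" argument. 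Replace the claimed constant lower bounds with "does not fall double exponentially" throughout, and make this last composition step explicit, and your proof matches the paper's.
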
 
\begin{proof}
\begin{enumerate}
\item[1)] Consider $e \in \widehat{E}_1.$ We will use the fact that $x^t_e$ falls double exponentially iff $y^t_{e'}$ corresponding to at least one edge $e' \in E_{c_e}\setminus e$ falls double exponentially. We have
\begin{align*}
x^t_{e}&=\e \prod_{e' \in E_{v_e}\setminus e}y^t_{e'} \\
&=\e \prod_{e' \in \{E_{v_e}\setminus \widehat{E}_{v_e}\}}y^t_{e'}\prod_{e' \in \{\widehat{E}_{v_e}\setminus e\}}y^t_{e'}.
\end{align*}
Define $A^t_e=\prod\limits_{e' \in E_{c_e}\setminus \widehat{E}_{c_e}}y^t_{e'}$. From Lemma \ref{lemma:not_dex_fall}, it can be deduced that $A^t_e$ does not fall double exponentially with $t$. Since $e$ is incident on a degree-one variable node of $\widehat{G}$,  $\widehat{E}_{c_e} \setminus e = \emptyset.$ So, $x^t_{e}=\e A^t_e$ and it does not fall double exponentially with $t$.
\item[2)] Similar to the proof of Part 2 of Lemma \ref{lemma:not_dex_fall}.
\item[3)]  Let  $\widehat{L}=e_0e_1\cdots e_le_0$ be a cycle in $\widehat{G_2}$. Since $\widehat{L}$ is a cycle,  for each $e_i$ in $\widehat{L}$, either ($v_{e_i}=v_{e_{i+1}}$ and $c_{e_{i+1}}=c_{e_{i+2}}$) or ($v_{e_i}=v_{e_{i-1}}$ and $c_{e_{i-1}}=c_{e_{i-2}}$), where addition in subscript of $e$ are modulo $l$. Next, we will prove $e_i \in \overline{D}_{x}$ if $v_{e_i}=v_{e_{i+1}}$ and $c_{e_{i+1}}=c_{e_{i+2}}$. From \eqref{eq:5} and \eqref{eq:6}, we have 
\begin{align*}
x^{t+l}_{e_i}&= \e \prod_{e \in E_{v_{e_i}}\setminus e_i}y^{t+l}_{e}\\
& = \e \prod_{e \in \{E_{v_{e_i}}\setminus \widehat{E}_{v_{e_i}}\}}y^{t+l}_{e}\prod_{e \in \widehat{E}_{v_{e_i}}\setminus e_i}y^{t+l}_{e}\\
&=A_e^{t+l}y^{t+l}_{e_{i+1}}\\
&\geq A_e^{t+l}x^{t+l-1}_{e_{i+2}},
\end{align*}  
where $A^{t+l}_{e_i}=\e\prod\limits_{e \in \{E_{v_{e_i}}\setminus \widehat{E}_{v_{e_i}}\}}y^{t+l}_{e}$ and $e_{i+1}=\widehat{E}_{v_{e_i}}\setminus e_i.$ From Lemma \ref{lemma:not_dex_fall}, it can be deduced that $A^{t+l}_{e_i}$ does not fall double exponentially. By applying above $l/2$ times we can show that $$x^{t+l}_{e_i}\geq A_{e_1}^{t}x^{t}_{e_{i}}.$$ Since $A^t_{e_i}$ does not fall double exponentially, $x^{t}_{e_i}$ does not fall double exponentially with $t$. Similarly, it can be proved that  $e_i \in \overline{D}_{x}$ if $v_{e_i}=v_{e_{i-1}}$ and $c_{e_{i+1}}=c_{e_{i+2}}$. So, $e_i \in \overline{D_x}$. Similarly, it can be proved that $e_i \in \overline{D}_y$. So, $e_i \in \overline{D}_{xy}$.   
\item [4)] Similar to the proof of Part 4 of Lemma \ref{lemma:not_dex_fall}.
\end{enumerate}
\end{proof}

\subsubsection{Proof of Theorem \ref{thm:double-expon-fall}}
\label{proof_thm2}
\begin{proof}
The proof follows the proof of \cite[Theorem 1]{Pradhan} very
closely. We will briefly sketch the proof here.
We will use the following inequality. For any $x \in [0,1]$ and a positive integer $d$, 
\begin{align}
\label{eq:ineq}
(d-1)x \geq 1-(1-x)^{d-1}
\end{align}
First observe that $E_{RED(G)} \subseteq E$. If $G$ contains degree-one variable nodes or cycles in the subgraph induced by degree-two variable nodes, then $E_{RED(G)}\subset E$. Let $|v_2|$ be the number of degree two variable nodes in $RED(G)$. Let us consider $e_{v_i,m} \in E_{RED(G)}$. For $l \in \{0,1,2, \cdots |v_2|\}$, we will show by recursion that
\begin{align}
\label{eq:recursion}
 x_{v_i,m}^{t+l} \leq C_l\left(\overline{x}^t\right)^{a(l,e_{v_i,m})}%\leq \widehat{C_l}\left(\overline{x}^t\right)^{\widehat{a}(l,e_{v_i,m})},
 \end{align} 
 where  $C_l$ is a constants. We have $C_0 =1$ and 
 $a(0,e_{v_i,m})=1.$ so, \eqref{eq:recursion} is true for $l=0$. In standard protograph, single parity check codes and repetition codes are used as component code at check nodes and variable nodes, respectively. So, for standard protograph, \eqref{eq:6} and \eqref{eq:7} becomes \eqref{eq:spc} and \eqref{eq:repetition}, respectively.
 \begin{align} 
 \label{eq:spc}
 y^{t+1}_{c_j,n}&=1-\displaystyle \prod_{k \in [d_{c_j}]\setminus n}(1-x^t_{c_j,k}),\\
 \label{eq:repetition}
 x^{t+1}_{v_i,m}&= \e \prod_{k \in [d_{v_i}]\setminus m}y^{t+1}_{v_i,k}.
\end{align} 
  where $[d_{c_j}]=\{1,2,\cdots,d_{c_j}\},$ and $[d_{v_i}]=\{1,2,\cdots,d_{v_i}\}.$ Let ${b}(l,e_{c_j,n})=\displaystyle\min_{k\in [d_{c_j}]\setminus n}{a}(l,e_{c_j,k}).$ Next, we will prove \eqref{eq:recursion} for arbitrary $l$. Using \eqref{eq:spc} and $\overline{x}_t \leq 1$, we get 
  \begin{align*}
  y^{t+l+1}_{c_j,n}&= 1-\prod_{k \in [d_{c_j}]\setminus n}(1-x^{t+l}_{c_j,k}),\\
  &\stackrel{\text{(a)}}{\leq} 1-\left(1-C_l\left(\overline{x}^t\right)^{b(l,e_{c_j,n})}\right)^{d_{c_j}-1},\\
 % &\stackrel{\text{b}}{\leq} 1-\left(1-C_l\left(\overline{x}^t\right)^{\widehat{b}(l,e_{c_j,n})}\right)^{d_{c_j}-1},\\
  &\stackrel{\text{(b)}}{\leq} (d_{c_j}-1)C_l \left(\overline{x}^t\right)^{{b}(l,e_{c_j,n})}.
  \end{align*}
  Inequality $(a)$  follows from \eqref{eq:recursion}. Since $\overline{x}^t\rightarrow 0$ for $\e \leq \e_{\text{th}}$, we have $C_l\left(\overline{x}^t\right)^{b(l,e_{c_j,n})} < 1$ for large $t$. So, inequality $(b)$ follows from \eqref{eq:ineq}. Consider $e_{v_i,m} \in E_{RED(G)}$. We get

\begin{align*} 
 x^{t+l+1}_{v_i,m}&=\e \prod_{n \in [d_{v_i}]\setminus m}y^{t+l+1}_{v_i,n},\\
 &\leq \e \left((d_{\max}-1)C_l\right)^{(d_{v_i}-1)} \prod_{n \in [d_{v_i}]\setminus m}\left(\overline{x}^t\right)^{{b}(l,e_{c_j,n})},\\
 &\stackrel{\text{a}}{\leq} \e \left((d_{\max}-1)C_l\right)^{(d_{v_i}-1)} \prod_{n \in [d_{v_i}(RED(G))]\setminus m}\left(\overline{x}^t\right)^{{b}(l,e_{(c_j,n)})},\\
 &\leq C_{l+1} \left(\overline{x}^t\right)^ {a(l+1,e_{(v_i,m)})}, \\
 %&\leq C_{l+1} \left(\overline{x}^t\right)^ {\widehat{a}(l+1,e_{(v_i,m)})}, 
\end{align*}
where $d_{\max}$ is the maximum check node degree $(d_{\max} \geq 2)$, $d_{v_i}$ and $d_{v_i}(RED(G))$ are degree of node $v_i$ in $G$ and $RED(G)$, respectively, $C_{l+1}= \displaystyle \max_{1\leq i \leq |V|} \e \left((d_{\max}-1)C_l\right)^{(d_{v_i}-1)}$  is a positive constant and we set
\begin{align}
a(l+1,e_{v_i,m})=\begin{cases}
1, \text{ if }\displaystyle \sum_{k \in [d_{v_i}(RED(G))]\setminus m}b(l,e_{(v_i,k)})=1,\\
2, \text{ if }\displaystyle\sum_{k \in [d_{v_i}(RED(G))\setminus m] }b(l,e_{(v_i,k)})\ge2.  
\end{cases}
\end{align}
Inequality (a) is true, because $d_{v_i} \geq d_{v_i}(RED(G))$.
 We claim that $a(v_2+1,e_{v_i,m})=2$. This can be proved by contradiction. For details of the proof, we refer readers to \cite[Theorem 1]{Pradhan}.
 So, we have shown that
\begin{align*}
\overline{x}^{t+v_2+1} \leq A (\overline{x}^t)^2,
\end{align*}
where $A=C_{t+v_2+1}$ is a constant and $\overline{x}^t \leq 1$ for $t>R$. By applying the above repeatedly, we can show that 
\begin{align}
\label{eq:11}
\overline{x}^{R+i(2v_2+1)}\leq A^{-1}(A\overline{x}^R)^{2^i},
\end{align}
 for every positive integer $i$, which implies $E_{RED(G)}\subseteq D_{xy}.$ Next, we will prove  $D_{xy} \subseteq E_{RED(G)}.$
In Algorithm \ref{alg:deg2node_condition}, we remove edges in the set $\overline{D}_x\cup \overline{D}_y \cup \overline{D}_{xy}$ from $G$ to obtain graph $RED(G)$, i.e, $E_{RED(G)}=E\setminus \left(\overline{D}_x\cup \overline{D}_y \cup \overline{D}_{xy} \right)$. In Lemmas \ref{lemma:not_dex_fall} and \ref{lemma:not_dex_fall_hat(G)}, it has been shown that $D_{xy} \cap \left(\overline{D}_x\cup \overline{D}_y \cup \overline{D}_{xy} \right) =\emptyset.$ So, $D_{xy} \subseteq E_{RED(G)}$
  and the proof of the theorem is complete.
\end{proof}

\subsection{Extension to DGLDPC protograph}
For an edge $e$ in a DGLDPC protograph $G$, let $h_{ce}$ and $h_{ve}$ denote the extrinsic message erasure probabilities at the check node and variable node, respectively. Note that $h_{ce}$ and $h_{ve}$ are polynomials in multiple variables denoting erasure probabilities of edges in $E_{c_e}\setminus e$ and $E_{v_e} \setminus e$, respectively (see \eqref{eq:5}-\eqref{eq:7}). Let $d_{ce}$ and $d_{ve}$ denote the least sum degree of terms in $h_{ce}$ and $h_{ve}$, respectively. Let $D_x, D_y, \overline{D}_x, \overline{D}_y, D_{xy},$ and $\overline{D}_{xy}$ denote the same quantities as in Section \ref{sec:double_exp_fall_standard}.  As shown in Lemma \ref{lemma:not_dex_fall}, in a standard protograph, edges from degree-$1$ variable nodes do not contribute to double exponential fall, because $d_{ve}$ corresponding to them is zero. Unlike standard protograph, $d_{ve}$ corresponding to an edge $e$ in DGLDPC protograph is not determined by degree of its variable node.  In a DGLDPC protograph, let $E_1^v=\{e: d_{ve}=0\}$ and $E_1^c=\{e: d_{ce}=0\}$, i.e $E_1^v$ and $E_1^c$ are the set of edges with a constant term in their corresponding $h_{ce}$ and $h_{ve}$, respectively. Define $E_1 = E_1^v \cup E_1^c$. Let $G_2$ be the subgraph induced by edges in $\{e: d_{ce}=1 \text{ or } d_{ve}=1\}.$ A loop $L=\{e_0e_1 \cdots e_{2l-1}e_0\}$ in $G_2$ is said to be non-$DEX$ (does not have double exponential fall property) if it satisfies one of the following conditions.
\begin{enumerate}
\item[1)] Degree-one term of $h_{ve_i}$ and $h_{ce_{i+1}}$, for $0\leq i \leq 2l-1$, are $a^v_{e_{i+1}}y_{e_{i+1}}$ and $a^c_{i+2}x_{e_{i+2}}$, respectively.
\item[2)] Degree-one term of $h_{ce_i}$ and $h_{ve_{i+1}}$, for $0\leq i \leq 2l-1$, are $a^c_{e_{i+1}}x_{e_{i+1}}$ and $a^v_{i+2}y_{e_{i+2}}$, respectively.
\end{enumerate}
In the above, $a^v_{e_i}, a^v_{e_{i+1}}, a^c_{e_{i}}$ and $a^c_{e_{i+1}}$ are constants, and addition in subscript of $e$ is modulo $2l$. If loop $L$ satisfies  condition-1 above, then define $E_2^{Lv}=\{e_0, e_2, \cdots, e_{2l-2}\}$ and $E_2^{Lc}=\{e_1, e_3, \cdots, e_{2l-1}\}$, else define $E_2^{Lc}=\{e_0, e_2, \cdots, e_{2l-2}\}$ and $E_2^{Lv}=\{e_1, e_3, \cdots, e_{2l-1}\}$. Define $E_2^L=E_2^{Lc} \cup E_2^{Lv}$ and $E_2=\underset{{L \in \mathcal{L}}}{\bigcup}E_2^L$, where $\mathcal{L}$ is the set of non-$DEX$ cycles in $G_2$. In the following lemma, we describe edges which are in $\overline{D}_x$ and $\overline{D}_{y}$. 
\begin{lemma}
\label{lemma:not_dex_fall_gen}
In the notation introduced above, the following are true: 
\begin{enumerate}
\item[1)] $E_1^v \subseteq \overline{D}_x$ and $E_1^c \subseteq \overline{D}_y$.
\item[2)] For an edge $e$ with $d_{ce}=1$, if one of the degree-1 term of $h_{ce}$ is $a_{e'}x_{e'}$ for $e' \in E_1^v$, then $e' \in \overline{D}_y$. Similarly, for an edge $e$ with $d_{ve}=1$, if one of the degree-1 term of $h_{ve}$ is $a_{e'}y_{e'}$ for $e' \in E_1^c$, then $e' \in \overline{D}_x$. 
\item[3)]  $E_2^{Lc} \subseteq \overline{D}_y$ and $E_2^{Lv} \subseteq \overline{D}_x$.
\item[4)] For an edge $e$ with $d_{ce}=1$, if the degree one of the degree-1 term of $h_{ce}$ is $a_{e'}x_{e'}$ for $e' \in E_2^{Lv}$, then $ e \in \overline{D}_y$. Similarly, for an edge $e \in G_2$,  if one of the degree-1 term of $h_{ve}$ is $a_{e'}y_{e'}$ for $e' \in E_2^{Lc}$, then $e \in \overline{D}_x$. 
\end{enumerate}
\end{lemma}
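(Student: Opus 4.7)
The plan is to mirror the structure of Lemmas~\ref{lemma:not_dex_fall} and~\ref{lemma:not_dex_fall_hat(G)}, replacing the explicit SPC/repetition recursions \eqref{eq:spc}--\eqref{eq:repetition} with the generalized extrinsic polynomials $h_{ve}$ and $h_{ce}$. The central observation I intend to exploit is that since $h_{ve}$ and $h_{ce}$ are polynomials with non-negative coefficients evaluated at arguments in $[0,1]$, every monomial $T$ appearing in $h_{ve}$ gives $x^{t+1}_e \ge T(\mathbf{y}^t,\e\mathbf{1})$, and similarly for $h_{ce}$. The entire lemma then reduces to selecting, in each case, a single monomial whose growth (or non-decay) is easy to control.

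For Parts~1 and~2, I would proceed directly from this monomial-bound principle. If $e\in E_1^v$, the hypothesis $d_{ve}=0$ means $h_{ve}$ contains a term free of $y$-variables; evaluating at the channel input yields $x^t_e \ge c_e(\e)>0$ for every $t$, placing $e\in\overline{D}_x$. The statement $E_1^c\subseteq\overline{D}_y$ is identical after interchanging $x$ and $y$. For Part~2, if $h_{ce}$ contains the linear term $a_{e'}x_{e'}$ with $e'\in E_1^v$, then $y^{t+1}_e \ge a_{e'}x^t_{e'} \ge a_{e'}c_{e'}(\e)$ by Part~1, so $e\in\overline{D}_y$; the companion claim for edges linked to $E_1^c$ is symmetric. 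Both parts reduce to one-line bounds.

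The main work is Part~3, which handles non-$DEX$ cycles in $G_2$. I would fix a loop $L=e_0e_1\cdots e_{2l-1}e_0$ satisfying Condition~1 and, for each $i$, retain only the distinguished linear monomial in the DE recursion:
\begin{align*}
 x^{t+1}_{e_i} &\ge a^v_{e_{i+1}}\,y^{t}_{e_{i+1}}, &
 y^{t+1}_{e_{i+1}} &\ge a^c_{e_{i+2}}\,x^{t}_{e_{i+2}}.
\end{align*}
Chaining these inequalities $l$ times around the loop gives a positive constant $\alpha_L=\prod_i a^v_{e_{i+1}}a^c_{e_{i+2}}$ with $x^{t+2l}_{e_i}\ge \alpha_L\,x^{t}_{e_i}$ for $e_i\in E_2^{Lv}$, and analogously $y^{t+2l}_{e_i}\ge \alpha_L\,y^t_{e_i}$ for $e_i\in E_2^{Lc}$, mirroring Part~3 of Lemma~\ref{lemma:not_dex_fall}. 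Iterating shows $x^t_{e_i}$ decays at most single-exponentially in $t$, strictly slower than any rate of the form $\exp(-\beta 2^{\alpha t})$, so $E_2^{Lv}\subseteq\overline{D}_x$ and $E_2^{Lc}\subseteq\overline{D}_y$. Condition~2 of the definition is treated identically with $x$ and $y$ swapped. Part~4 is then to Part~3 what Part~2 was to Part~1: retain the single linear monomial and substitute the single-exponential lower bound guaranteed by Part~3 in place of the constant lower bound from Part~1.

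The step I expect to be most delicate is Part~3, specifically the bookkeeping that matches the alternation $E_2^{Lv}/E_2^{Lc}$ to the side (variable or check) that supplies the linear term at each position of the loop, and the verification that the coefficients $a^v_\cdot$ and $a^c_\cdot$ are strictly positive and are not cancelled by higher-degree corrections in $h_{ve}$ or $h_{ce}$. The non-negative-coefficient structure of the BEC extrinsic polynomials established in \cite{Sharon,LentmaierBECProtoDE} is precisely what rules out such cancellations and is the only substantive input needed beyond the monomial-bound idea.
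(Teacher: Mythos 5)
Your argument mirrors the paper's proof of this lemma: Part~1 isolates the constant term of $h_{ve}$ or $h_{ce}$ to get a constant lower bound, Parts~2 and~4 retain a single degree-one term and substitute the lower bound from Part~1 or Part~3, and Part~3 chains the distinguished degree-one monomials around the non-$DEX$ loop to obtain a geometric lower bound on $x^t$ and $y^t$ over one traversal of the cycle, precluding double-exponential decay. The $2l$ versus $l$ iteration count is just bookkeeping. Two remarks. First, your stated justification for the monomial bound is wrong: the extrinsic erasure polynomials $h_{ve}$, $h_{ce}$ are \emph{not} polynomials with non-negative coefficients --- the paper's own Example~\ref{ex:EXIT} gives $h_1=x_3x_5+x_2x_3x_4-x_3x_4x_5x_2$, which carries a negative coefficient, so the "non-negative-coefficient structure" you invoke does not exist. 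What actually makes the bound work is monotonicity of the BEC extrinsic erasure probability in each argument together with its representation $\sum_S c_S\prod_{j\in S}x_j\prod_{j\notin S}(1-x_j)$ with $c_S\in[0,1]$: this yields $h_{ve}(\mathbf{y})\ge h_{ve}(\mathbf{0})$ for the constant-term bound, and $h_{ce}(\mathbf{x})\ge h_{ce}(0,\ldots,x_{e'},\ldots,0)=a_{e'}x_{e'}$ for the degree-one bound (the single-variable restriction being linear when $d_{ce}>0$). Your conclusion survives, but not for the reason you give. Second, in Part~2 you conclude $e\in\overline{D}_y$ (the edge whose $h_{ce}$ contains the distinguished degree-one term), whereas the lemma statement as printed reads $e'\in\overline{D}_y$; the former is what the analogy with Part~2 of Lemma~\ref{lemma:not_dex_fall} requires, so you appear to have silently corrected a swapped index in the statement and proof.
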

\begin{proof}
\begin{enumerate}
\item[1)] Consider $e \in E_1^v$. From definition of $E_1^v$, we know that $d_{ve}=0$. Let $a_e$ be the degree-zero term of $h_{ve}$. From \eqref{eq:7}, it follows that $x^{t}_e \geq a_e$. So, $E_1^v \subseteq \overline{D}_x$. Similarly, it can be shown that $E_1^c \subseteq \overline{D}_y$. 
\item[2)] From \eqref{eq:6}, it follows that $y^{t+1}_e \leq a_{e'}x_{e'}^t $. So, $e' \in \overline{D}_y$. Other statement can be proved similarly.
\item[3)] Consider $e_i \in E_2^{Lv}$. From definition of $E_2^{Lv}$ and \eqref{eq:6}-\eqref{eq:7}, it follows that
\begin{align*}
x^{t+l}_{e_i} &\geq a_{e_{i+1}}y^{t+l}_{e_{i+1}}\\
&\geq a_{e_{i+1}}a_{e_{i+2}}x^{t+l-1}_{e_{i+2}}.
\end{align*}
Repeating the above $l$ times, we get $$x^{t+l}_{e_i} \geq \left(\displaystyle \prod_{i=1}^{l}a^c_{2i}a^v_{2i+1}\right)x^{t}_{e_i}.$$ 
So, $E_2^{Lv} \subseteq \overline{D}_x$. Similarly, it can be shown that  $E_2^{Lc} \subseteq \overline{D}_y$.
\item[4)] Consider an edge $e \in G_2$. Since the degree-1 term of $h_{ce}$ is $a_{e'}x_{e'}$, from \eqref{eq:7}, it follows that $y_{e'}^{t+1}\geq a_{e'}x_{e'}^{t}$. Since $e' \in E_2^{Lv}$, $e \in \overline{D}_y$. Other statement can be proved similarly.
\end{enumerate}
\end{proof}
Algorithm \ref{alg:deg2node_condition} is modified to recursively remove $E_1$ and $E_2$ to find $RED(G)$ for a DGLDPC protograph $G$. Another modification is as follows: after removing an edge, the message erasure probability of all edges are updated by replacing the message corresponding to the removed edge by $1.$ For example, consider $x^{t+1}_{e_1}=\epsilon y^{t+1}_{e_2}$. After removal of edge $e_2$, it becomes $x^{t+1}_{e_1}=\epsilon$. Define $\widehat{E}= E- \{E_1 \cup E_2\}.$ Let $\widehat{G}(\widehat{V}\cup \widehat{C})$ be the subgraph of $G$ induced by edges in $\widehat{E}$. In context of double exponential fall,  $\widehat{E}_1$ and $\widehat{E}_2$ behave in same way as $E_1$ and $E_2$, proof of which is similar to Lemma \ref{lemma:not_dex_fall_hat(G)}.   
Algorithm \ref{alg:deg2node_condition_dex} is extended with no significant modification. After obtaining $D_x$ from Algorithm \ref{alg:deg2node_condition_dex}, Theorem \ref{thm:block_error_threshold} can be applied directly to find whether a DGLDPC protograph satisfies the block-error threshold condition. The following lemma plays a role in relating minimum distance of component codes to double exponential fall property.

\begin{lemma}(\cite{mct}[Theorem 3.79])
 \label{exitfunction_min_distance}
 For a linear code with minimum distance $d$, let
$f_i(\epsilon)$ be the probability that the extrinsic output of the
MAP decoder over BEC$(\e)$ is an erasure for the $i$-th bit.
 Then, $f_i(\e)$ is a polynomial in $\e$ such that the coefficient of $\e^i$
 is nonzero only for $i\ge d-1$. 
\end{lemma}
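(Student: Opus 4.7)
The plan is to give a combinatorial characterization of the extrinsic-erasure event on the BEC and then expand $f_i(\e)$ by inclusion--exclusion so that the minimum-distance bound on codeword supports controls the lowest-order monomial.

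First I would fix a bit position $i$ and a realized erasure pattern $E\subseteq\{1,\dots,n\}\setminus\{i\}$ on the other coordinates. A standard argument shows that the extrinsic MAP output at position $i$ is an erasure if and only if there exist two codewords that agree on every non-erased position in $\{1,\dots,n\}\setminus\{i\}$ but differ in coordinate $i$. Taking their difference, this is equivalent to the existence of a nonzero codeword $w$ with $w_i=1$ whose support satisfies $\mathrm{supp}(w)\setminus\{i\}\subseteq E$. Let $\mathcal{W}_i=\{w\in\mathcal C\setminus\{0\}:w_i=1\}$ and, for each $w\in\mathcal W_i$, define $A_w=\{E:\mathrm{supp}(w)\setminus\{i\}\subseteq E\}$. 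Since erasures are i.i.d.\ Bernoulli$(\e)$, $\Pr(A_w)=\e^{|\mathrm{supp}(w)|-1}$.

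Second, I would write the extrinsic erasure event as $\bigcup_{w\in\mathcal W_i}A_w$ and apply inclusion--exclusion:
\begin{equation*}
f_i(\e)=\sum_{\emptyset\neq S\subseteq\mathcal W_i}(-1)^{|S|+1}\,\e^{\,\bigl|\bigcup_{w\in S}\mathrm{supp}(w)\setminus\{i\}\bigr|}.
\end{equation*}
This is manifestly a polynomial in $\e$, and every monomial that appears has exponent equal to the size of some union $\bigl|\bigcup_{w\in S}\mathrm{supp}(w)\setminus\{i\}\bigr|$.

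Third, I would invoke the minimum distance. For any single $w\in\mathcal W_i$ one has $|\mathrm{supp}(w)|\ge d$, so $|\mathrm{supp}(w)\setminus\{i\}|\ge d-1$; and for a union over $S$, the size only grows, so the exponent is always $\ge d-1$. Consequently every monomial in the inclusion--exclusion expansion carries a power of $\e$ at least $d-1$, and therefore after combining like terms the coefficient of $\e^{j}$ in $f_i(\e)$ vanishes for all $j<d-1$, which is the claim.

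The main obstacle is purely bookkeeping: one must be careful that the inclusion--exclusion sum is a finite polynomial identity (it is, because $\mathcal W_i$ is finite) and that cancellations between terms of equal exponent cannot accidentally create a nonzero coefficient at some $\e^{j}$ with $j<d-1$. The latter is automatic because \emph{each} term individually has exponent $\ge d-1$, so any linear combination does as well. The combinatorial step 1 is the only place any real decoder-theoretic content enters; everything afterward is algebraic.
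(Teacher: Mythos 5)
Your argument is correct. The paper itself gives no proof of this lemma — it simply defers to \cite{mct}[Theorem 3.79] — so your proposal supplies the content that the citation stands in for. The key step, identifying the extrinsic-erasure event with the existence of a nonzero codeword $w$ satisfying $w_i=1$ and $\mathrm{supp}(w)\setminus\{i\}\subseteq E$, is exactly the stopping-set (or support-union) characterization of extrinsic MAP failure on the BEC that underlies both \cite{mct} and the support-weight/information-function treatment in \cite{Sharon} referenced elsewhere in the paper; the inclusion--exclusion expansion then makes the degree bound transparent, since every monomial's exponent is the size of a union of sets each of size at least $d-1$. Two small points worth being explicit about, though neither affects correctness: the reduction to the all-zero codeword relies on the channel symmetry of the BEC together with linearity of the code, which is what makes $f_i$ well defined independently of the transmitted word; and the ``no accidental low-order terms from cancellation'' remark is automatic precisely because you bound each summand's exponent individually before any cancellation is attempted, so the polynomial lies in the ideal generated by $\e^{d-1}$. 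As you note, the only decoder-theoretic input is the step-1 equivalence; a reader not willing to grant it as ``standard'' would want the one-line argument that the MAP posterior on bit $i$ is nondegenerate iff two codewords consistent with the unerased coordinates in $\{1,\dots,n\}\setminus\{i\}$ disagree at $i$, whose difference is the desired $w$.
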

\begin{proof}
 For proof, see \cite{mct}[Theorem 3.79].
\end{proof}
\begin{remark}
In a DGLDPC protograph $G(V \cup C, E)$, let $\overline{E_1^v}=\{e\in E:$ minimum distance of component code at $v_e$ is $1\}$ and $\overline{E_1^c}=\{e\in E:$ minimum distance of component code at $c_e$ is $1\}$. Let $\overline{E_2}$ be the set of edges in loop formed by nodes having component code with minimum distance $2$. From Lemma \ref{exitfunction_min_distance}, it follows that $\overline{E_1^v} \subseteq E_1^v, \overline{E_1^c} \subseteq E_1^c,$ and $\overline{E_2} \subseteq E_2$. $\overline{E_1^v}, \overline{E_1^c},$ and  $\overline{E_2}$ are removed recursively to obtain $\overline{RED(G)}$. Observe that $\overline{RED(G)}$ is a subgraph of $RED(G)$.    
\end{remark}

In this section, we have derived necessary and sufficient condition for block-error threshold. In the next section, we will use the block-error threshold condition to design protographs with block-error threshold close to capacity. 
\section{Optimized DGLDPC Protographs}
\label{sec:optim-dgldpc-prot}
In this section, we design capacity-approaching protographs  with block-error threshold by using the condition derived in Section \ref{sec:prop-block-error}. Let $G$ be a protograph of size $|V| \times |C|$, where $V$ and $C$ denote the set of variable and check nodes. We  divide variable nodes in $V$ into two sets - standard variable nodes denoted as $V_s$, and generalized variable nodes denoted as $V_g$. $C_s$ and $C_g$ are similar notations for check nodes. We use repetition code and SPC code at standard variable nodes and check nodes, respectively. At a generalized node $v$, we choose a $(d_{v},k_{v})$  linear code as component code. To design a rate-$r$ code, we choose component codes at generalized nodes in such a way that $r=1-\frac{\sum_{i=1}^{|C|}(d_{c_i}-k_{c_i})}{\sum_{i=1}^{|V|}k_{v_i}}$. At standard variable node $v$ and check node $c$, we have $k_v=1$ and $d_c-k_c=1.$ We maximize the  block-error threshold of protograph over the connections of protograph, degree of standard nodes, and label of edges connected to generalized nodes by using differential evolution \cite{DE}.
\subsection{Differential Evolution} 
Different steps of differential evolution are elaborated as follows. The details of optimizing labels of edges at generalized nodes is skipped for brevity.
\begin{itemize}
\item[1)]Initialization is done as follows
\begin{itemize}
\item[$\bullet$] Start with $|C||V|$ base matrices $B_{k,0}$, $0\leq k \leq |C||V|$, each of size $|C|\times|V|$. To restrict the search space, entries of base matrices are chosen randomly from the set $\{0,1,\cdots,8\}$.  Enforce  variable and check node degree constraint at generalized nodes, i.e. $\sum_{i=1}^{|C|} B_{k,0}(i,j) = d_{v_j} $, for $v_j\in V_g$, $\sum_{j=1}^{|V|}B_{k,0}(i,j) = d_{c_i}$,  $c_i \in C_g$.
\end{itemize}
If $B_{k,0}$ does not satisfy block-error threshold condition derived in Theorem \ref{thm:block_error_threshold}, add an edge between degree-1 or degree-2 standard variable node and standard check node, chosen randomly. Continue adding such edges till the block-error threshold condition is satisfied.
\item[2)]Mutation: Protographs of  generation $N$ $(N=0, 1,\cdots)$ are 				 interpolated as follows.
\begin{align}
 M_{k,N}=[B_{r_1,N}+0.5(B_{r_2,N}-B_{r_3,N})],
\end{align}
where $r_1$, $r_2$, $r_3$ are randomly-chosen distinct values, and $[x]$ denotes the absolute value of $x$ rounded to
the nearest integer.
 \item[3)] Crossover: A candidate protograph $B'_{k,N}$ is chosen as
  follows. The $(i,j)$-th entry of $B'_{k,N}$ is set as the $(i,j)$-th
  entry of $M_{k,N}$ with probability $p_c$, or as the $(i,j)$-th
  entry of $B_{k,N}$ with probability $1-p_c$. We use $p_c=0.88$ in
  our optimization runs. If $B_{k,N}(i,j)=B'_{k,N}(i,j)$,   labels of  the edges corresponding to $B'_{k,N}(i,j)$ are copied to labels of edges corresponding to $B_{k,N}(i,j),$ otherwise   edges corresponding to $B'_{k,N}$ are labeled randomly  without assigning same label  to two edges connected to same node.   
 \item[4)]Selection:  If the bit-error threshold of $B_{k,N}$ is greater than that of
  $B'_{k,N}$ and it satisfies block-error threshold condition in Theorem  \ref{thm:block_error_threshold}, set $B_{k,N+1}=B_{k,N}$; else, set $B_{k,N+1}=B'_{k,N}$.
  \item[5)]Termination: Steps 2--4 are run for several generations (we
  run up to $N=6000$) and the protograph that gives the best block-error threshold is chosen as the optimized protograph. 
\end{itemize} 

We compute thresholds of protographs for the BEC by using the density evolution described in Section \ref{sec:density_evolution_BEC}. We compute thresholds of protograph for AWGN channel using the EXIT function method described in \cite{Sharon}.
%\subsection{Lifting Protographs}
%\label{subsec:liftingprotograph}
%A $|v| \times |c|$ protograph $G$ is lifted to a LDPC code of length $n$ as follows:
%\begin{itemize}
%\item[1.] Let $m$ denote the maximum element of the corresponding $|V|\times|C|$ base matrix $B$. Replace  $B(i,j)$ with a random $m \times m$ binary matrix $M$ with row and column sum equal to $B(i,j)$. If $B(i,j)=0$,  replace by an $m \times m$ zero matrix.  Denote the new $|V|m \times |C|m$ binary matrix as $B'$.
%\item[2.] We assume $m|V|$ divides $n$. Replace each nonzero entry of $B'$ with a $\frac{n}{|V|m} \times \frac{n}{|V|m}$, randomly generated, circular shift permutation matrix.  
%\end{itemize}
%There can be more sophisticated liftings. However, we use the above simple method in all simulations and do not attempt to optimize the lifting further.
\subsection{Optimized protographs for BEC}

For BEC, optimized LDPC protographs (base matrices) of rate $1/10$ and $1/8$ with block-error thresholds within $0.01$ of  capacity are given in \eqref{th=0.894} and \eqref{th=0.866}, respectively, in the Appendix. It is observed that optimized protographs have significant fraction of degree-one variable nodes. Thresholds of LDPC protographs with degree-one nodes, LDPC protographs without degree-one nodes, GLDPC protographs with degree-one nodes and AR4A protograph \cite{5174517} for BEC have been compared in Table \ref{tab:optprot_BEC}. We see that optimized protographs have better thresholds when degree-one nodes are allowed in optimization. For example, an optimized, rate-$1/8$,  protograph with degree-one bit nodes in \eqref{th=0.866} has threshold 0.866 over BEC, while optimized, rate-$1/8$ protograph without degree-one nodes has a threshold 0.85. In optimization of DGLDPC protograph,  $(7,4)-$Hamming code and its dual are used as component codes at generalized variable nodes and generalized check nodes, respectively. For example, $8 \times 10$, rate-$1/10$ DGLDPC protograph in Table \ref{tab:optprot_BEC}, has two generalized check nodes and two generalized variable nodes. From simulation, it is observed that increasing number of generalized node does not improve the block-error threshold. From Table \ref{tab:optprot_BEC},  it is also observed that use of a generalized component code does not improve the threshold. From simulation, it is observed that use of other linear codes, such as Hadamard code, as component code does not improve the block-error threshold. However, generalized nodes are useful in designing smaller protographs with block-error threshold reasonably close to capacity. For example, an optimized $8 \times 10$, rate-$1/8$ DGLDPC protograph has block-error threshold 0.86 which is quite close to $0.866$ achieved with a $21 \times 24$ LDPC protograph. 

Optimized protographs in Table \ref{tab:optprot_BEC}  are lifted to codes of blocklength 5000 using cyclic progressive edge growth described in \cite{cPEG} and their BER/FER are simulated using the standard message-passing decoder. The plots are shown in  Fig \ref{fig:simulation}(a). For comparison,  AR3A/AR4A \cite{1523619} protographs  are lifted to the same blocklength of $5000$ using the method in \cite{cPEG} and their BER/FER are plotted in Fig. \ref{simulation_BEC}. We see that the BER and FER of optimized codes are better than that of AR4A codes of same rate.

\begin{table}[htp]
  \centering
  \begin{tabular}{|c|c|c|c|c|}
    \hline
    {\bf{Rate}}&\makecell{\bf Size of \\ \bf Protograph }&\makecell{\bf Types of \\ \bf Protograph }&\makecell{\bf DE \\ \bf Threshold}&\makecell{\bf Block \\ \bf Threshold}\\
    \hline
    \multirow{2}{*}{1/10}&$27\times 30$ in \eqref{th=0.894} & \makecell{LDPC with \\ degree-1} & 0.894 & Yes\\[1.2ex]
                                   \cline{2-5}
                                   &\makecell{$10\times 11$ \\ in \cite[Fig. 12] {1523619}}&{AR4JA}&0.868&No\\
                                    \cline{2-5}
                                   &$17\times 23$&\makecell{GLDPC with \\ degree-1}&0.892&Yes\\
                                   \cline{2-5}
                                    &$27\times 30$&\makecell{LDPC w/o \\ degree-1}&0.877&Yes\\
                                     \cline{2-5}
                                   &$12\times 14$&\makecell{DGLDPC w/o \\ degree-1}&0.893&Yes\\               
    \hline
    \multirow{2}{*}{1/8}&$21\times 24$ in \eqref{th=0.866} & \makecell{LDPC with \\ degree-1} &0.866&Yes\\
                                   \cline{2-5}
                                   &\makecell{$8\times 9$ \\in \cite[Fig. 11]{1523619}}&{AR4JA}&0.846&No\\
                                    \cline{2-5}
                                   &$13\times 19$&\makecell{GLDPC with \\ degree-1}&0.866&Yes\\
                                   \cline{2-5}
                                   &$14\times 16$  \cite{Pradhan}&\makecell{LDPC w/o \\ degree-1}&0.85&Yes\\ 
                                    \cline{2-5}
                                   &$8\times 10$&\makecell{DGLDPC w/o \\ degree-1}&0.86&Yes\\                        
    \hline
%    \multirow{3}{*}{1/6} &$20\times 24$ in \eqref{th:0.823}&\makecell{LDPC with \\ degree-1}&0.823&Yes\\
%                                   \cline{2-5}
%                                   &\makecell{$6\times 7$  \\ in \cite[Fig. 10]{1523619}}&{AR4JA}&0.81&No\\  
%                                    \cline{2-5}
%                                   &$10\times 14$&\makecell{GLDPC with \\ degree-1}&0.822&Yes\\                             
%                                   \cline{2-5}
%                                   &$10\times 12$  \cite{Pradhan}&\makecell{LDPC w/o \\ degree-1}&0.812&Yes\\
%                                    \cline{2-5}
%                                   &$9\times 12$&\makecell{DGLDPC w/o \\ degree-1}&0.824&Yes\\
%    \hline
  \end{tabular}
  \caption{Optimized protographs and thresholds for BEC.}
  \label{tab:optprot_BEC}
\end{table}

\subsection{Optimized Protographs For AWGN}
% For the BIAWGN channel, optimized LDPC protographs (base matrices) of rate $1/3$ and $1/4$ with block-error thresholds within $0.3$dB of  capacity are given in \eqref{th:-0.330} and \eqref{th:-0.630}, respectively, in the Appendix. Thresholds over AWGN channel have been compared in Table \ref{tab:optprot_AWGN}. 
Observations similar to the BEC case hold for AWGN channel as well. Fig. \ref{fig:simulation}(b) compares FER of optimized  rate-$1/3$ and rate-$1/5$  codes with protographs of same rate from 5G standard\cite{5G},  PBRL family \cite{pbrl}, and  AR4A family. All protographs are lifted to codes having blocklength around 64000. Parity check matrix corresponding to optimized protographs, protographs in 5G standard, and AR4A protographs are obtained by cyclic progressive edge growth described in \cite{cPEG}.  Optimized protographs in this work have better block-error threshold than protographs  of same rate in $5G$ standard by $0.1$dB. We also observe that allowing multiple edges between same pair of nodes enables to design protograph of smaller size with comparable threshold. For example, the $28 \times 41$, rate -$1/3$ protograph in this work which allows multiple edges between nodes has block-error threshold of $-0.405$dB, whereas the $46 \times 68$, rate-$1/3$ protograph in 5G standard which does not have multiple edges between nodes has a block-error threshold of $-0.225$dB. Although block-error threshold condition is derived assuming infinite blocklength, FER performance of optimized protographs are better than their corresponding codes in 5G, PBRL, and AR4A protographs when blocklength is finite as shown in Fig \ref{fig:simulation}(b). For example, at FER=$10^{-3}$ and blocklength 64000, the rate-$1/5$ protograph in \eqref{th:-0.8} has a gap of $0.463$dB to capacity, whereas the rate-$1/5$ protograph in 5G standard has a gap of $0.513$dB to capacity.

\begin{table}
  \centering
  %\resizebox{\columnwidth}{!}{
  \begin{tabular}{|c|c|c|c|c|}
     \hline
    {\bf{Rate}}&\makecell{\bf Size of \\ \bf Protograph }&\makecell{\bf Types of \\ \bf Protograph }&\makecell{\bf DE \\ \bf Threshold}&\makecell{\bf Block \\ \bf Threshold}\\
    \hline
    \multirow{1}{*}{1/5}&$34\times 42$ in \eqref{th:-0.8} &\makecell{LDPC with \\ degree-1}&-0.834&Yes\\
    							 \cline{2-5}
                                   & $42 \times 52$ &{5G}&-0.714&Yes\\
                                    \cline{2-5}
                                   &$4\times 5$ in \cite[Fig.8]{1523619}&{AR4A}&-0.522&No\\
    \hline
    \multirow{1}{*}{1/3}&$28\times 41$ in \eqref{th:-0.330}  &\makecell{LDPC with \\ degree-1}&-0.405&Yes\\
    							 \cline{2-5}
                                   &$17 \times 25$ &{PBRL}&-0.150&Yes\\
                                    \cline{2-5}
                                    \cline{2-5}
                                   &$46 \times 68$ &{5G}&-0.225&Yes\\
                                    \cline{2-5}
                                   &$3\times 4$ in \cite[Fig.7]{1523619}&{AR4A}&-0.130&No\\
    \hline
  \end{tabular}
 % }
  \caption{Optimized protographs and thresholds for AWGN.}
  \label{tab:optprot_AWGN}
\end{table}

 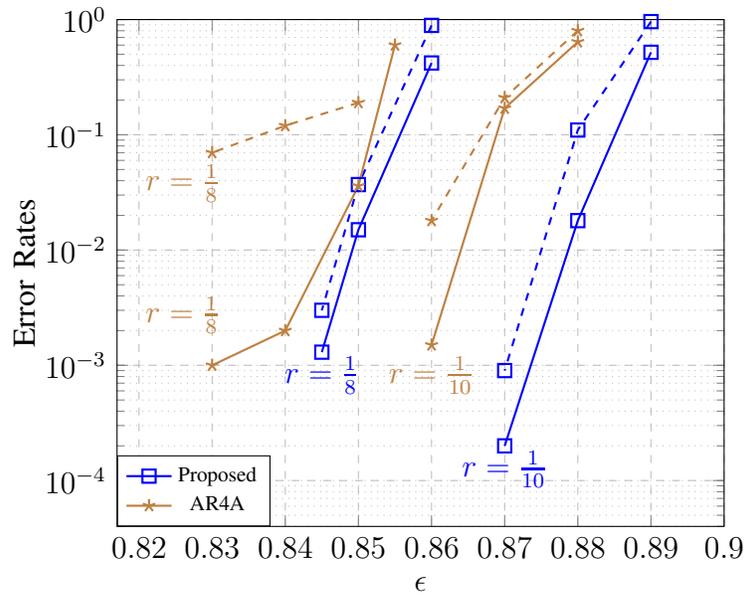
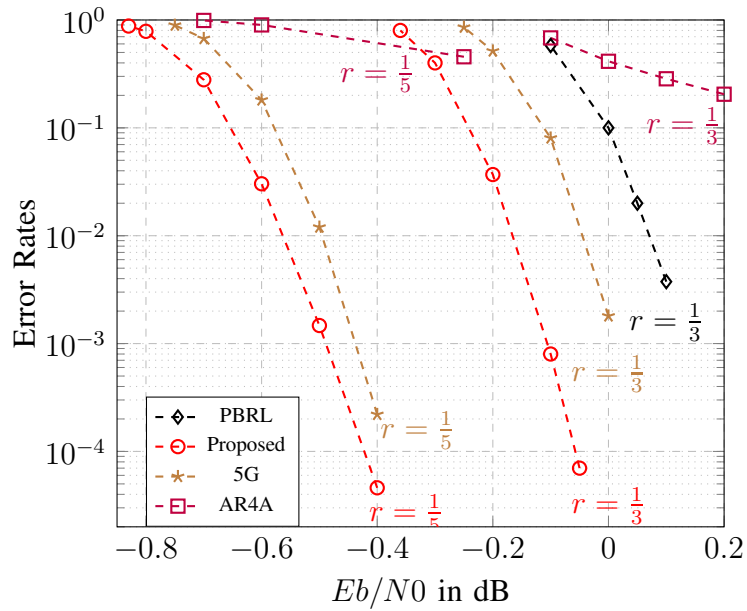
\begin{figure}

\begin{subfigure}[htb]{1\textwidth}
   \centering
  \begin{tikzpicture}
\begin{semilogyaxis}[width=3.8in,xmin=0.817,xmax=0.9,ymin=4e-5,ymax=1,grid=both,minor grid style=dotted,major grid style=dashed,xlabel=$\epsilon$,ylabel=Error Rates,legend style={font=\fontsize{8}{9}\selectfont,at={(axis cs:0.817,4e-5)},anchor=south west}]
\addplot [color=blue,solid,mark=square,mark size=2.5,mark options={solid},thick] 
coordinates {
    (0.89,5.2e-1)
    (0.88,1.8e-2)
    (0.87,2e-4)
}node[below,pos=0.02] {} node[below] at (axis cs:0.87,2.3e-4){$r=\frac{1}{10}$};
\addplot [color=blue,dashed,mark=square,mark size=2.5,mark options={solid},thick] 
coordinates {
    (0.89,9.6e-1)
    (0.88,1.1e-1)
    (0.87,9e-4)
};
\addplot [color=blue,solid,mark=square,mark size=2.5,mark options={solid},thick]
coordinates {
    (0.86,4.2e-1)
    (0.85,1.5e-2)
    (0.845,1.3e-3)
}node[below,pos=0.02] {}node[below] at (axis cs:0.845,1.5e-3){$r=\frac{1}{8}$};  
\addplot [color=blue,dashed,mark=square,mark size=2.5,mark options={solid},thick]
coordinates {
    (0.86,8.9e-1)
    (0.85,3.7e-2)
    (0.845,3e-3)
};
%\addplot [color=red,solid,mark=square,mark size=2.5,mark options={solid},thick]
%coordinates{
%    (0.815,2e-1)
%    (0.81,5.3e-2)
%    (0.80,2.4e-3)
%}node[below,pos=0.02] {}node[below] at (axis cs:0.805,2.4e-3){$r=\frac{1}{6}$};;
%\addplot [color=red,dashed,mark=square,mark size=2.5,mark options={solid},thick]
%coordinates{
%    (0.815,5e-1)
%    (0.81,1.4e-1)
%    (0.80,4e-3)
%};

\addplot [color=brown,solid,mark=star,mark size=2.5,mark options={solid},thick]
coordinates{
    (0.88,6.4e-1)
    (0.87,1.7e-1)
    (0.86,1.5e-3)
}node[below,pos=0.02] {}node[below]at (axis cs:0.86,1.5e-3) {$r=\frac{1}{10}$};
\addplot [color=brown,dashed,mark=star,mark size=2.5,mark options={solid},thick]
coordinates{
   (0.88,8e-1)
    (0.87,2.1e-1)
    (0.86,1.8e-2)
};
\addplot [color=brown,solid,mark=star,mark size=2.5,mark options={solid},thick]
coordinates{
	(0.855,6e-1)
    (0.85,3.6e-2)
    (0.84,2e-3)
    (0.83,1e-3)
}node[above,pos=1,left] {}node[below] at (axis cs:0.826,5e-3) {$r=\frac{1}{8}$};
\addplot [color=brown,dashed,mark=star,mark size=2.5,mark options={solid},thick]
coordinates{
   (0.85,1.9e-1)
    (0.84,1.2e-1)
    (0.83,7e-2)
}node[below,pos=0.1,left] {}node[below] at (axis cs:0.826,7e-2) {$r=\frac{1}{8}$};
%\addplot [color=green,solid,mark=star,mark size=2.5,mark options={solid},thick]
%coordinates{
%    (0.81,2.9e-1)
%    (0.80,3.8e-2)
%    (0.79,1e-3)
%}node[below] at (axis cs:0.79,1e-3){$r=\frac{1}{6}$}node[below,pos=0.1] {};
%\addplot [color=green,dashed,mark=star,mark size=2.5,mark options={solid},thick]
%coordinates{
%    (0.81,8e-1)
%    (0.80,5.3e-1)
%    (0.79,3.9e-1)
%}node[below,pos=0.1,left] {}node[below] at (axis cs:0.79,3.9e-1){$r=\frac{1}{6}$};
%\legend{Proposed r=1/10,,Proposed r=1/8,,Proposed r=1/6,,AR4A r=1/10,,AR4A r=1/8,,AR4A r=1/6}
\legend{Proposed ,, ,,,,AR4A , ,}
\end{semilogyaxis}
\end{tikzpicture}
\caption{Performance of codes over BEC, length=5000.}
\label{simulation_BEC}
\end{subfigure}
\begin{subfigure}[htb]{1\textwidth}
   \centering
\begin{tikzpicture}
\begin{semilogyaxis}[width=3.8in,xmin=-0.85,xmax=0.2,ymin=2e-5,ymax=1,grid=both,minor grid style=dotted,major grid style=dashed,xlabel=$Eb/N0$ in dB,ylabel=Error Rates,legend style={font=\fontsize{8}{9}\selectfont,at={(axis cs:-0.8,2e-5)},anchor=south west}]
%\addplot [color=red,solid,mark=o,mark size=2.5,mark options={solid},thick] 
%coordinates {
%    (-0.4,6e-6)
%    (-0.5,1.74e-5)
%    (-0.6,4.51e-4)
%    (-0.7,5.99e-3)
%    (-0.8,4.14e-2)
%    (-0.83,5.26e-2)
%}node[below] at (axis cs:-0.75,5e-4) {$r=\frac{1}{5}$};
\addplot [color=red,dashed,mark=o,mark size=2.5,mark options={solid},thick] 
coordinates {
    (-0.4,4.6e-5)
    (-0.5,1.47e-3)
    (-0.6,3.03e-2)
    (-0.7,2.79e-1)
    (-0.8,7.85e-1)
    (-0.83,8.79e-1)
}node[below] at (axis cs:-0.35,5.1e-5) {$r=\frac{1}{5}$};

%\addplot [color=brown,solid,mark=star,mark size=2.5,mark options={solid},thick] 
%coordinates {
%    (-0.4,9e-7)
%    (-0.5,8.47e-4)
%    (-0.6,1.49e-2)
%    (-0.7,7.22e-2)
%    (-0.75,1.61e-1)
%    
%}node[below] at (axis cs:-0.45,3e-3) {$r=\frac{1}{5}$};
\addplot [color=brown,dashed,mark=star,mark size=2.5,mark options={solid},thick] 
coordinates {
	(-0.4,2.2e-4)
    (-0.5,1.20e-2)
    (-0.6,1.81e-1)
    (-0.7,6.73e-1)
    (-0.75,8.94e-1)
}node[below] at (axis cs:-0.33,3e-4) {$r=\frac{1}{5}$};
%\addplot [color=purple,solid,mark=square,mark size=2.5,mark options={solid},thick] 
%coordinates {
%    %(-0.45,3e-6)
%    (-0.25,3e-3)
%    (-0.6,1.91e-1)
%    (-0.7,2.08e-1)
%    
%}node[below] at (axis cs:-0.25,4e-3) {$r=\frac{1}{5}$};
\addplot [color=purple,dashed,mark=square,mark size=2.5,mark options={solid},thick] 
coordinates {
	(-0.25,4.56e-1)
    (-0.6,9e-1)
    (-0.7,9.90e-1)
}node[below] at (axis cs:-0.4,6e-1) {$r=\frac{1}{5}$};
%\addplot [color=black,solid,mark=diamond,mark size=2.5,mark options={solid},thick] 
%coordinates {
%	(0.1,5.05e-4)
%    (0.05,2.69e-3)
%    (0,1.26e-2)
%    (-0.1,7.88e-2)
%    
%}node[below] at (axis cs:0.1,3e-3) {$r=\frac{1}{3}$};
\addplot [color=black,dashed,mark=diamond,mark size=2.5,mark options={solid},thick] 
coordinates {
	(0.1,3.75e-3)
    (0.05,2e-2)
    (0,1e-1)
    (-0.1,5.8e-1)
    
}node[below] at (axis cs:0.1,3e-3) {$r=\frac{1}{3}$};
%\addplot [color=red,solid,mark=o,mark size=2.5,mark options={solid},thick] 
%coordinates {
%    (-0.1,7.38e-6)
%    (-0.2,4.92e-4)
%    (-0.3,8.69e-3)
%    (-0.36,2.45e-2)
%    
%}node[below] at (axis cs:0,6e-5) {$r=\frac{1}{3}$};
\addplot [color=red,dashed,mark=o,mark size=2.5,mark options={solid},thick] 
coordinates {
    (-0.05,7e-5)
    (-0.1,8e-4)
    (-0.2,3.69e-2)
    (-0.3,4e-1)
    (-0.36,8e-1)
}node[below] at (axis cs:0,6e-5) {$r=\frac{1}{3}$};
%\addplot [color=brown,solid,mark=star,mark size=2.5,mark options={solid},thick] 
%coordinates {
%    (0,2.29e-4)
%    (-0.1,9.82e-3)
%    (-0.2,6.60e-2)
%    (-0.25,1.13e-1)
%    
%}node[below] at (axis cs:0,3e-4) {$r=\frac{1}{3}$};
\addplot [color=brown,dashed,mark=star,mark size=2.5,mark options={solid},thick] 
coordinates {
	(0,1.80e-3)
    (-0.1,8.04e-2)
    (-0.2,5.15e-1)
    (-0.25,8.56e-1)
    
}node[below] at (axis cs:0,1e-3) {$r=\frac{1}{3}$};
%\addplot [color=purple,solid,mark=square,mark size=2.5,mark options={solid},thick] 
%coordinates {
%    (0.2,2.35e-4)
%    (0.1,8.76e-4)
%    (0,1.33e-2)
%    (-0.1,8.00e-2)
%    
%}node[below] at (axis cs:0.13,2e-1) {$r=\frac{1}{3}$};
\addplot [color=purple,dashed,mark=square,mark size=2.5,mark options={solid},thick] 
coordinates {
	(0.2,2.05e-1)
    (0.1,2.85e-1)
    (0,4.14e-1)
    (-0.1,6.82e-1)
    
}node[below] at (axis cs:0.13,2e-1) {$r=\frac{1}{3}$};
\legend{,,,PBRL,Proposed,5G,AR4A}
\end{semilogyaxis}
\end{tikzpicture}
\caption{Performance of codes over AWGN, length=64000.}
\label{simulation_AWGN}
\end{subfigure}
\caption{BER:Solid, FER:Dashed.}
\label{fig:simulation}
\end{figure}

\section{Conclusion}
In summary, we designed low-rate codes with block-error threshold close to capacity. From simulation, we  observe that optimized codes have better FER performance than comparable protographs of same rate. This work provides a theoretical basis for LDPC codes in 5G standard.
\bibliographystyle{IEEEtran}
\bibliography{IEEEabrv,reference}
\begin{appendices}
  \section{Optimized Codes} 
  Non zero entries of optimized base matrices corresponding to different rates are  given below. Non zero entries of the $i$-th row of a base matrix is listed next to $i:$. Superscript denotes the element at that location. If superscript is not mentioned, then non-zero element at that location is $1$. Variable nodes corresponding to first two column of rate-$1/5$ and rate-$1/3$, respectively, in \eqref{th:-0.8} and $\eqref{th:-0.330}$ are punctured.
 %\setlength{\arraycolsep}{1.3pt}
 %\newpage
  %\begin{strip}
  \begin{align}
  \label{th:-0.330}
 & 1: 1, 3, 10, 13, 20, & 2:& 1, 2, 3, 7, 10, & 3:& 4, 5, 6, 7, 9,  & 4:& 2^2, 5, 6, 14, 25.\nonumber\\
&\quad  30.&& 33. && 38. & 5:& 2, 4, 5, 7, 25,\nonumber\\
& 6: 3, 4, 6, 11, 12,& 7:& 1, 3, 10, 12, 26.& 8:& 1, 2, 4, 7, 40.&&32.\nonumber\\
&\quad 27.& 9:& 1, 2, 29. & 10:& 2, 3, 9, 12, 18,  & 11:& 1, 3, 10, 11, 13,  \nonumber\\
& 12: 1, 2, 4, 36. & 13:& 2, 3, 4, 8, 21, && 23, 28.&& 18, 22, 24.\nonumber\\
& 14: 1, 2, 3, 9, 21.&& 24. & 15:& 3, 4, 7, 8, 15,  & 16:& 1, 3, 4, 17, 22, \nonumber\\
& 17: 1, 3, 4, 9, 35. & 18:& 1, 2, 6, 31. && 17. && 27.\nonumber\\ 
&19: 1, 2, 4, 8, 37. & 20:& 3^2, 5, 6, 9, 14, & 21:& 1, 3, 6, 8, 13, & 22:& 2, 4, 5, 10, 15, \nonumber\\
&23: 1, 2, 3, 12, 17, && 15, 16. && 16. && 19, 26.\nonumber\\
&\quad 41. & 24:& 1, 2, 8, 39. & 25:& 2, 3, 5, 7, 14,  & 26:& 1, 2, 5, 6, 11,\nonumber\\
& 27: 1, 3, 7, 11, 16,  & 28:& 1, 2, 4, 5, 20. && 23. && 34.\nonumber\\
& \quad 19, 28.
 \end{align}
\begin{align}
\label{th:-0.8}
%& 1: 1, 2, 5, 16. & 2:& 1, 2, 3, 5, 13. & 3:& 2^2, 3, 5, 10, 14, & 4:& 1, 15, 19, 24.\nonumber\\ & 5: 1^2, 6.& 6:& 1, 4, 5^2, 6, 7,&&15. & 7:& 2, 4^2, 5, 6, 15^2, \nonumber\\
% & 8: 1, 3, 5, 8, 23.&&12, 14, 19, 24.& 9:& 2, 6, 12, 21, 24. && 17, 23, 24.\nonumber\\
% & 10: 3, 6, 7, 21. & 11:& 1, 3, 4, 5, 9. & 12:& 2, 5, 15, 23. & 13:& 1, 2^2, 3^2, 4^2, 5^2,\nonumber\\
% & 14: 1^2, 5, 18.& 15:& 1, 2, 5, 15.& 16:& 2, 4, 5, 14, 21.&&6^2, 11, 14, 23^2.\nonumber\\
% & 17: 1^2, 2, 20. & 18:& 1, 3, 4, 5. & 19:& 1, 5, 6, 22. & 20:& 3, 4, 5, 15, 25.\nonumber\\
% & 21: 1, 5, 6, 26.
& 1: 1, 6, 8, 10, 22. & 2:& 2, 3, 7, 13, 39. & 3:& 3, 4, 9, 17, 35. & 4:& 1, 2, 11.\nonumber\\
& 5: 2, 4, 15, 18, 19. & 6:& 1, 2, 3^2, 26. & 7:& 1, 3, 21. & 8:& 2, 4, 12, 40.\nonumber\\
& 9: 1, 5, 9, 32. & 10:& 2, 8, 20, 21, 22. & 11:& 1, 4, 6, 30. & 12:& 2, 5, 10, 16, 18, 20.\nonumber\\
& 13: 1^2, 3, 27. & 14:& 3, 4, 5, 12, 14, 34. & 15:& 1, 2, 3, 42. & 16:& 2, 3, 5, 7.\nonumber\\
& 17: 2, 4, 10, 16, 38. & 18:& 6, 7, 8, 11, 41. & 19:& 1, 2, 8, 14, 19. & 20:& 2, 4, 9, 16.\nonumber\\
& 21: 1, 2, 3, 31. & 22:& 1, 2, 4, 29. & 23:& 2, 3, 10, 13. & 24:& 1, 2, 12, 24.\nonumber\\
& 25: 1, 2, 15, 25. & 26:& 3, 5, 8, 23, 28. & 27:& 1, 6, 7, 14. & 28:& 1, 2, 11, 33.\nonumber\\
& 29: 2, 3, 4, 17. & 30:& 6, 9^2, 15, 24. & 31:& 5, 6^2, 7, 13, 37. & 32:& 1, 3, 14, 23.\nonumber\\
& 33: 1, 4, 5, 12, 13. & 34:& 1, 4, 11, 17, 36. 
\end{align}
\begin{align}
  \label{th=0.894}
 & 1: 3, 21, 29.  &2:& 10, 21, 24.  & 3:& 11, 12, 14, 21. & 4:& 6, 12, 21, 30. & 5:& 18, 21, 29.\nonumber\\   & 6: 8, 20, 21^2, 23. &7:& 7, 9, 11. & 8:& 3, 8, 21, 28, 30.	& 9:& 16^2,18. & 10:& 10, 21, 25, 26.\nonumber\\
& 11: 4, 11, 25.   & 12:& 11, 25, 29.  & 13:& 5, 11, 25.  & 14:& 2, 16, 21, 24. & 15:& 21, 24, 27.\nonumber\\
& 16: 11, 25^2.  & 17:& 1, 3, 14, 18, 21^2.  &18:& 8, 9, 17, 21.  & 19:& 7, 14, 21, 25.   & 20:& 11, 24, 25, 30.\nonumber\\
 & 21: 11, 15, 22. & 22:& 3, 9, 13, 30. & 23:& 19, 21^2.  & 24:& 8, 11, 20.  & 25:& 3^2, 7, 13, 14, \nonumber\\ & 26: 8, 11, 21. & 27:& 15, 25, 29.&&&&&& 19, 24, 25, 30.
  \end{align}
  \begin{align}
  \label{th=0.866}
 & 1: 2, 5, 9, 12, 13,  & 2:& 13, 15, 18, 23. & 3:& 6, 10, 12, 13. & 4:& 8, 10, 13, 17.  & 5:& 7, 13, 15.\nonumber\\& \quad 17, 23. & 6:& 13, 15, 17. & 7:& 6, 7, 11, 15, & 8:& 4, 7, 13, 18. & 9:& 7, 14, 15, 17.\nonumber\\
 &10:7^2,12,21.&11:&3,7^2.&& 18. &12:&7,13,15,19.&13:& 7,13,18^2,24.\nonumber\\
 & 14: 7^2, 17, 20. & 15:& 2, 7, 12. &16:& 7^2, 9, 13, 22. &17:& 9,13^2,15^2,& 18:& 6, 7, 9, 13.\nonumber\\& 19: 7^2, 12. & 20:& 7, 11, 18,22^2,& 21:& 1, 7, 12, 15.&&16,17.\nonumber\\&&&24.
 \end{align}
 \end{appendices}

\end{document}